\newtheorem{remark}{Remark}
\newtheorem{theorem}{Theorem}
\newtheorem{assumption}{Assumption}
\newtheorem{lemma}{Lemma}
\newenvironment{proof}{{\indent \indent \bf Proof.}}{\hfill $\qed$\par}
\newcommand{\vecc}{\rm {vec}}
\newcommand{\beq}{\begin{equation}\begin{aligned}}
		\newcommand{\eeq}{\end{aligned}\end{equation}}
\newcommand{\beqn}{\begin{equation*}\begin{aligned}}
		\newcommand{\eeqn}{\end{aligned}\end{equation*}}
\begin{document}
\begin{frontmatter}

\title{Distributed Least Squares Algorithm for Continuous-time Stochastic Systems Under Cooperative Excitation Condition\thanksref{footnoteinfo}} 
 
\thanks[footnoteinfo]{This work was supported by the National Key R\&D Program of China under Grant 2018YFA0703800, the Strategic Priority Research Program of Chinese Academy of Sciences under Grant No. XDA27000000, Natural Science Foundation of China under Grant U21B6001, and National Science Foundation of Shandong Province (ZR2020ZD26).\\Corresponding author: Zhixin Liu.}

\author[AMSS,CAS]{Xinghua Zhu}\ead{zxh@amss.ac.cn},
\author[AMSS,CAS]{Zhixin Liu}\ead{lzx@amss.ac.cn}

\address[AMSS]{Key Laboratory of Systems and Control, Institute of Systems Science, Academy of Mathematics and Systems Science,\\
	Chinese Academy of Sciences, Beijing 100190, P. R. China.}
\address[CAS]{School of Mathematical Sciences, University of Chinese Academy of Sciences, Beijing 100049, P. R. China.}

\begin{keyword}
	Distributed least squares, stochastic differential equation,
	diffusion strategy, cooperative excitation condition, convergence.
\end{keyword}

\begin{abstract}                
In this paper, we study the distributed adaptive estimation problem of continuous-time stochastic dynamic systems over sensor networks where each agent can only communicate with its local neighbors. A distributed least squares (LS) algorithm based on diffusion strategy is proposed such that the sensors can cooperatively estimate the unknown time-invariant parameter vector from continuous-time noisy signals. By using the martingal estimation theory and Ito formula, we provide upper bounds for the estimation error of the proposed distributed LS algorithm, and further obtain the convergence results under a cooperative excitation condition. Compared with the existing results, our results are established without using the boundedness or persistent excitation (PE) conditions of regression signals.
We provide simulation examples to show that multiple sensors can cooperatively accomplish the estimation task even if any individual can not.
\end{abstract}

\end{frontmatter}

\section{Introduction}
\label{sec:introduction}
In recent years, the distributed parameter estimation problem attracts much attention of researchers in diverse fields such as adaptive control, statistical learning, and signal processing (cf., \cite{Davies1970System,1982System}). In comparison with centralized algorithms (cf., \cite{pao1994centralized}) where a fusion center is required to collect and process data from all sensors, each sensor in distributed ones (cf., \cite{2001The,1994Distributed}) is used to estimate the unknown parameters by local information exchange from its neighbors. The distributed algorithms have the advantages of robustness to the node failure, reduction of the communication cost and the calculation burden. Thus, they are widely applied in many engineering systems such as target localization, intrusion detection and cooperative spectrum sensing (cf.,\cite{sayed2013diffusion,wang2014distributed}).

The least squares (LS) algorithm is a classical and elegant algorithm in system identification, statistics and many other fields  due to its fast convergence speed and simple calculation, and there are many studies on the performance analysis of the algorithm (cf., \cite{karami2007tracking,guo1995convergence,lai1982least}). With the development of computer science and communication technology, distributed LS algorithms have been widely studied where diffusion (cf., \cite{cattivelli2009diffusion,xie2018analysis}) and consensus (cf., \cite{carli2008distributed,schizas2009distributed}) are two commonly used strategies in the design of distributed algorithms. For discrete-time systems, \cite{bertrand2011diffusion} considered the diffusion-based bias-compensated distributed 
LS algorithm to estimate time-invariant parameters, and established the mean-square stability of the proposed algorithm 
where the regressors are required to satisfy the independency and stationarity assumptions.  \cite{mateos2012distributed} analyzed the stability and steady-state mean-square error performance of the consensus-type LS algorithm with independent spatio-temporally white regressors, and \cite{yu2019robust} analyzed the mean-square convergence of a robust diffusion LS algorithm with impulsive noise under the ergodic assumption on the regressors. Note that in almost all results on the performance analysis of distributed LS algorithms, regression vectors are required to satisfy the independency, stationarity or ergodicity assumptions which are hard to be applied to feedback control systems. To overcome this issue, \cite{xie2020convergence} proposed a distributed least squares algorithm by diffusing local estimation and inverse of  information matrix, and established the convergence of the algorithm under cooperative excitation condition which is a generalization of the weakest possible condition for LS in \cite{lai1982least}.


As we know, (stochastic) differential equations are often used to describe the dynamical behavior of natural and engineering systems and the coefficients of the differential equation have practical physical meaning (cf., \cite{feixianxing,sobczyk2001stochastic}). The distributed adaptive estimation problem for continuous-time systems with deterministic regression vectors has also been studied.
For example, \cite{6578135} studied the consensus-type distributed identification algorithm for continuous-time systems where  regressors are uniformly bounded  and satisfy the cooperative persistent excitation (PE) condition. \cite{2012Continuouss} investigated the exponential stability of diffusion-type LMS algorithm with PE regressors. \cite{papusha2014collaborative} investigated the asymptotic parameter convergence  of consensus-type distributed gradient algorithm under PE condition when the topology is undirected. \cite{javed2021excitation} analyzed the uniform exponential stability of the estimation error for cooperative gradient algorithm with cooperative PE regressors when the topology is directed. In the presence of noise in the model, \cite{2012Continuouss} and \cite{inproceedings} illustrated that continuous-time distributed LMS algorithm with PE regressors will remain stable when there is sufficiently small bounded noise in the system.  \cite{zhou2013distributed} analyzed the
stability of the consensus-type estimation algorithm under an observability condition. In the performance analysis of distributed adaptive estimation algorithms of continuous-time systems, the regressors are required to be deterministic and satisfy the PE condition or a condition similar to PE.  How to relax these conditions and establish the theoretical analysis of the distributed adaptive estimation algorithms of continuous-time stochastic systems with stochastic regression vectors remains unresolved.


In this paper, we propose a diffusion-type LS algorithm of continuous-time systems described by stochastic integral equations. To deal with challenges caused by the correlated noise and the stochastic differential equations, we use the continuous-time martingale convergence theory (c.f., \cite{christopeit1986quasi}) and the ITO integral (c.f., \cite{P1980Statistics}). Based on this, we establish the convergence result of the proposed algorithm under a cooperative excitation condition on stochastic regressors. The main contributions are summarized as follows.

\begin{enumerate}
	\item We present a distributed LS algorithm to estimate an unknown parameter vector of continuous-time stochastic systems according to two steps:  the adaption of the
	continuous-time innovation process and the diffusion of the local estimation and the inverse of information matrix from neighboring sensors.
	\item We introduce a cooperative excitation condition on stochastic regression signals under which the convergence result of the algorithm can be established. The cooperative excitation condition can be degenerated to the weakest condition for the continuous-time LS algorithm for single agent (cf. \cite{2006Continuous}). We remark that the cooperative excitation condition is  more genreal than the PE conditions commonly used in the existing literature.
	\item We provide simulation examples to illustrate the cooperative effect of multiple sensors in the sense that multiple agents can cooperatively accomplish the estimation task even if any individual can not.
\end{enumerate}

The rest of this paper is arranged as follows. The problem formulation including some preliminaries and the distributed LS algorithm is introduced in Section \ref{PROBLEM FORMULATION}.  The convergence of the proposed algorithm is established in Section \ref{Convergence}. A numerical simulation is given in Section \ref{SIMULATION}. The concluding remarks are made in Section \ref{conclusion}.
\section{Problem Formulation}\label{PROBLEM FORMULATION}
\subsection{Some Preliminaries}\label{Preiminary}
\subsubsection{Matrix Theory}
In this paper, we use $A=(a_{ij})\in \mathbb{R}^{m\times n}$ to denote an $m\times n$-dimensional real matrix, and $I_m$ to denote the $m$-dimensional identity matrix. For an $m\times m$-dimensional matrix $A$, $\lambda_{\max}(A)$ and $\lambda_{\min}(A)$ denote the maximum and  minimum eigenvalues of $A$, and $tr(A)$ and $|A|$ denote the trace and determinant of $A$, respectively.  $\|A\|$ represents the Euclidean norm, i.e., $\|A\|\overset{\Delta}{=}(\lambda_{\max}\{AA^T\})^{\frac{1}{2}}$, where the notation $T$ denotes the transpose of the matrix. The matrix $A$ is called stochastic if all the row sums of $A$ equal to 1. Furthermore, the matrix $A$ is called doubly stochastic if both the row sums and the column sums equal to 1. For a matrix sequence $\{X_k, k\geq 0\}$ and a positive scalar sequence $\{b_k, k\geq 0\}$,  $X_k=O(b_k)$
means that there exists a positive constant $M$ independent of $k$, such that $\|X_k\|\leq Mb_k$ holds for all $k\geq 0$, and  $X_k=o(b_k)$ means that $\lim\limits_{k\rightarrow\infty} \frac{\|X_k\|}{b_k} =0$. We use $\log(\cdot)$ to denote the natural logarithmic function. The Kroneker product $A\otimes B$ of two matrices $A=(a_{ij})\in \mathbb{R}^{m\times n}$ and $B=(b_{ij})\in\mathbb{R}^{p\times q}$ is defined as
\begin{center}$
	A\otimes B=\left[
	\begin{matrix}
		a_{11}B &\cdots &a_{1n}B\\
		\vdots    &\ddots & \vdots\\
		a_{n1}B &\cdots &a_{nn}B\\
	\end{matrix}
	\right] \in \mathbb{R}^{mp\times nq}.
	$\\
\end{center}


\subsubsection{Graph Theory}
The communication between sensors (or agents) are modelled as an undirected graph $ \mathcal G = (\mathcal{V, E})$,  where $\mathcal V=\{1,2,\cdots, N\}$ is composed of all sensors and $\mathcal E\subseteq \mathcal V\times \mathcal V$ is the set of edges. The weighted adjacency matrix $\mathcal{A}=(a_{ij})_{n\times n}$ is used to describe the interaction weights between sensors, where the weight $a_{ij}\textgreater 0$ if and only if $(i,j)\in \mathcal{E}$. For simplicity of analysis, we assume that the matrix $\mathcal{A}$ is symmetric and doubly stochastic. A path of length $s$ in the graph $\mathcal G$ is defined as a sequence of labels of sensors $\{i_1,\cdots,i_s\}$ satisfying $(i_j,i_{j+1}) \in \mathcal{E}$ for all $1\leq j\leq s-1$. The diameter of the graph $\mathcal{G}$, denoted as $D_{\mathcal{G}}$, is defined as the maximum length of the path between any two sensors. We define $N_i=\{j\in \mathcal{V}|(j,i)\in\mathcal{E}\}$ as the neighbor set of the graph $\mathcal{G}$. See \cite{[30]} for more information about the graph theory.

\subsubsection{Continuous-Time Martingale}
Let $(\Omega,\mathcal{F},P)$  be a probability space, and $\{\mathcal{F}_t,t\geq0\}$ be a nondecreasing family of sub-$\sigma$-algebras of $\mathcal{F}$.
The process $\{X_t,\mathcal{F}_t;0\leq t < \infty\}$ is said to be a martingale if we have $E(X_t|\mathcal{F}_s)=X_s$ almost surely (a.s.) for  $0\leq s < t< \infty$ , where $E[\cdot|\cdot]$ is the conditional
mathematical expectation operator. The Wiener process $\{w(t),\mathcal{F}_t\}$ is an independent incremental process and a square-integrable martingale, i.e., $E[w(t)|\mathcal{F}_s]=0$ and $E[w^2(t)]<\infty, t>s\geq0,$ where $E[\cdot]$ is the
mathematical expectation operator.
For the continuous-time martingale, the following martingale estimation theorem is often used to deal with the continuous-time stochastic noise.

\begin{lemma}\label{lA.2}\citep{chen1991identification} Let $(M_t, \mathcal{F}_t)$ be a measurable process satisfying
	$\int_0^t\|M_s\|^2ds<\infty$,\ \ \ a.s.\ \ $\forall t>0$.
	If $\{ w_t, \mathcal F_t\}$ is a Wiener process, then as $t \rightarrow\infty$,
	$\int_0^t M_s dw_s=O\bigg(\sqrt	{S(t) \log \log(S(t)+e)}\bigg) \ \ \ a.s.,$ 
	where $S(t)$ is defined by $S(t)=\int_0^t\|M_s\|^2ds.$
\end{lemma}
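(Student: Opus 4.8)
The plan is to identify $N_t:=\int_0^t M_s\,dw_s$ as a continuous (local) martingale and then transfer the law of the iterated logarithm (LIL) for Brownian motion to it by a time change.

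First I would reduce the matrix/vector statement to a scalar one. Writing $N_t$ entrywise, each entry is a scalar continuous local martingale that is well defined on all of $[0,\infty)$ thanks to the hypothesis $\int_0^t\|M_s\|^2\,ds<\infty$ a.s.\ for every $t$, and whose quadratic variation is dominated by $\int_0^t\|M_s\|_F^2\,ds\le c\,S(t)$, where the constant $c$ depends only on the dimensions (equivalence of the Frobenius and spectral norms). Since $\|N_t\|^2\le\sum_{i,j}\big((N_t)_{ij}\big)^2$, it suffices to prove, for a generic scalar continuous local martingale $n_t$ with increasing process $\langle n\rangle_t$, the estimate $n_t=O\big(\sqrt{\langle n\rangle_t\log\log(\langle n\rangle_t+e)}\big)$, and then use $\langle n\rangle_t\le cS(t)$ together with monotonicity to pass to $S(t)$.

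For the scalar estimate I would split on the behaviour of $\langle n\rangle_\infty$. If $\langle n\rangle_\infty<\infty$, then $n_t$ is an $L^2$-bounded martingale, hence converges a.s.\ to a finite limit, so $n_t=O(1)$, which is subsumed in $O\big(\sqrt{\langle n\rangle_t\log\log(\langle n\rangle_t+e)}\big)$ (the latter stays bounded away from $0$ for large $t$, unless $\langle n\rangle\equiv0$, in which case $n\equiv0$). If $\langle n\rangle_\infty=\infty$, I would invoke the Dambis--Dubins--Schwarz theorem to write $n_t=B_{\langle n\rangle_t}$ for a standard Brownian motion $B$ on a possibly enlarged probability space; since $\langle n\rangle_t\to\infty$, the classical LIL $\limsup_{u\to\infty}|B_u|/\sqrt{2u\log\log u}=1$ a.s.\ yields $|n_t|=|B_{\langle n\rangle_t}|=O\big(\sqrt{\langle n\rangle_t\log\log\langle n\rangle_t}\big)$. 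Combining the two cases, summing over the entries, and using $\langle n\rangle_t\le cS(t)$ with $S(t)\to\infty$ (or the trivial bounded case) gives the lemma.

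The step I expect to be the main obstacle is the time change: one must verify that the random clock $\langle n\rangle_t$ diverges before quoting the LIL, which is precisely the reason for splitting on $\{\langle n\rangle_\infty=\infty\}$, and one must observe that evaluating the LIL along the random increasing family $\langle n\rangle_t$ is legitimate because the LIL is a pathwise statement about $B$ on $[0,\infty)$. A self-contained alternative, which is essentially the argument in \citep{chen1991identification} and avoids Dambis--Dubins--Schwarz, is to use the exponential supermartingale $\exp\big(\lambda n_t-\tfrac{\lambda^2}{2}\langle n\rangle_t\big)$, apply the maximal inequality for nonnegative supermartingales at the stopping times $\tau_k=\inf\{t\ge0:\langle n\rangle_t\ge a^k\}$ for a suitable $a>1$, and conclude by Borel--Cantelli; there the delicate points are optimising over $\lambda$ and $a$ and controlling the oscillation of $n$ between consecutive $\tau_k$, exactly as in the classical proof of the LIL.
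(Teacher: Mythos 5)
The paper gives no proof of this lemma---it is quoted directly from \cite{chen1991identification}---so there is no in-paper argument to compare against; your proof is the standard one for this law-of-iterated-logarithm estimate and is correct: the entrywise reduction with $\langle n\rangle_t\le c\,S(t)$, the dichotomy on $\{\langle n\rangle_\infty<\infty\}$ versus $\{\langle n\rangle_\infty=\infty\}$, and the Dambis--Dubins--Schwarz time change followed by the pathwise LIL evaluated along the divergent clock $\langle n\rangle_t$ all go through, as does the exponential-supermartingale/Borel--Cantelli alternative you sketch (which is closer to what the cited reference actually does). One small correction: on the event $\{\langle n\rangle_\infty<\infty\}$ the local martingale need not be an $L^2$-bounded martingale (that would require $E\langle n\rangle_\infty<\infty$); the correct justification for $n_t=O(1)$ there is the standard a.s.\ convergence theorem for continuous local martingales on $\{\langle n\rangle_\infty<\infty\}$, or equivalently the continuity of $B$ at the finite time $\langle n\rangle_\infty$ in your DDS representation.
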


\subsubsection{Stochastic Differential Equation}

For the Borel-measurable functions $a(t, x): [0,\infty)\times\mathbb{R}^d\to\mathbb{R}$ and $b(t, x): [0,\infty)\times\mathbb{R}^d\to\mathbb{R}$, the (a. s. continuous) random process $\xi=\{\xi(t), t>0\}$ is said to be a strong solution of the stochastic differential equation $d\xi(t)=a(t, \xi(t))dt+b(t, \xi(t))dw(t)$ with the $\mathcal{F}_0$-measurable initial condition $\xi(0)$ if
\begin{enumerate} 
	\item for each $t>0$, the random variable $\xi(t)$ is $\mathcal{F}_t$-measurable.
	\item
	$|a(\cdot,\cdot)|^{\frac{1}{2}}\in \mathcal{P}_t$,
	$|b(\cdot,\cdot)|\in \mathcal{P}_t$, i.e.,
	$P\big(\int_0^T|a(t, \xi(t))|dt<\infty\big)=1$, $P\big(\int_0^T|b(t, \xi(t))|^2dt<\infty\big)=1.$
	\item $\xi(t)=\xi(0)+\int_0^ta(s, \xi(s)) ds+\int_0^tb(s, \xi(s)) dw(s), t>0, a.s..$
\end{enumerate}

The Ito formula plays a key role to deal with continuous-time stochastic processes, which is described as follows,
\begin{lemma}\label{lA.1}\citep{IntroductiontoControlTheory} Assume that the stochastic process $\xi(t)$ obeys the equation $d\xi(t)=a(t,\xi(t)) dt+B(t,\xi(t))
	dw(t)$, and  \{$a(t,\xi(t)),\mathcal{F}_t$\} is an $l$-dimensional adaptive process and $\{B(t,\xi(t)),\mathcal{F}_t\}$ is an $l\times m$-dimensional adaptive matrix process satisfying $\|a(\cdot,\cdot)\|^{\frac{1}{2}}\in \mathcal{P}_t$ and $\|B(\cdot,\cdot)\|\in \mathcal{P}_t$. If the functions $f_t^{'}(t,x)$, $f_{x}^{'}(t,x)$ and $f_{xx}^{''}(t,x)$ are continuous, 
	then
	\begin{align}
		&df(t,\xi(t))=f_t^{'}(t,\xi(t))dt+{f^{'}}^T_{\xi(t)}(t,\xi(t)) d\xi(t)\notag\\&\hskip 1cm+\frac{1}{2}tr\{f_{\xi(t)\xi(t)}^{''}(t,\xi(t))B(t,\xi(t)) (B(t,\xi(t)))^T\}dt.\notag
	\end{align}
\end{lemma}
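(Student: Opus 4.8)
The plan is to prove Lemma~\ref{lA.1} in its equivalent integrated form, namely $f(t,\xi(t)) = f(0,\xi(0)) + \int_0^t f_s'(s,\xi(s))\,ds + \int_0^t {f_x'}^T(s,\xi(s))\,d\xi(s) + \frac12\int_0^t tr\{f_{xx}''(s,\xi(s)) B(s,\xi(s)) B^T(s,\xi(s))\}\,ds$, by the classical partition-plus-second-order-Taylor argument; the only nonclassical feature, the trace correction term, will emerge entirely from the quadratic variation of the Wiener process. First I would reduce to a favourable setting by localization: with the stopping times $\tau_M = \inf\{t\ge 0: \|\xi(t)\|\ge M \text{ or } \int_0^t(\|a\| + \|B\|^2)\,ds \ge M\}$ one has $\tau_M \uparrow \infty$ a.s., so it suffices to prove the identity for the stopped process $\xi(\cdot\wedge\tau_M)$; on $\{t\le \tau_M\}$ the path of $\xi$ stays in a fixed compact set, whence $f, f_t', f_x', f_{xx}''$ are bounded and uniformly continuous along the path and $a, B$ obey deterministic integrability bounds.

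Next, for a partition $0=t_0<t_1<\cdots<t_n=t$ of mesh $\delta_n\to 0$ I would telescope $f(t,\xi(t))-f(0,\xi(0))=\sum_k \Delta f_k$ with $\Delta f_k = f(t_{k+1},\xi(t_{k+1}))-f(t_k,\xi(t_k))$, and Taylor-expand each $\Delta f_k$ to first order in $t$ and to second order in $x$, the second-order remainder being controlled by the modulus of continuity of $f_{xx}''$. This produces three principal sums: $(\mathrm{i})\ \sum_k f_t'(t_k,\xi(t_k))\Delta t_k$; $(\mathrm{ii})\ \sum_k {f_x'}^T(t_k,\xi(t_k))\Delta \xi_k$; and $(\mathrm{iii})\ \tfrac12\sum_k \Delta\xi_k^T f_{xx}''(t_k,\xi(t_k))\Delta\xi_k$, plus a remainder. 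Sum (i) is an ordinary Riemann sum tending a.s. to $\int_0^t f_s'\,ds$; sum (ii) tends in probability to the Ito integral $\int_0^t {f_x'}^T d\xi(s)$ by the definition of the stochastic integral of a continuous adapted integrand.

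For sum (iii) I would substitute $\Delta\xi_k = a_k\Delta t_k + B_k\Delta w_k + \rho_k$, with $a_k = a(t_k,\xi(t_k))$, $B_k = B(t_k,\xi(t_k))$, and $\rho_k$ collecting the integral remainders. Expanding the quadratic form, every term carrying a factor $a_k\Delta t_k$ or $\rho_k$ is negligible, via $\sum_k(\Delta t_k)^2\to 0$, $\sum_k\mathbb E\|\Delta w_k\|^2 = m\sum_k\Delta t_k$, $\sum_k\mathbb E\|\rho_k\|^2\to 0$ (which uses the vanishing oscillation of $a,B$ on the compact path), and Cauchy--Schwarz; the surviving term is $\tfrac12\sum_k tr\{G_k\,\Delta w_k\Delta w_k^T\}$ with $G_k := B_k^T f_{xx}''(t_k,\xi(t_k)) B_k$. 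The crux is to show this converges in probability to $\tfrac12\int_0^t tr\{f_{xx}''BB^T\}\,ds$. The mechanism is that $\Delta w_k\Delta w_k^T$ acts, in conditional mean, as $\Delta t_k I_m$: since $\mathbb E[tr\{G_k\Delta w_k\Delta w_k^T\}\mid\mathcal F_{t_k}] = tr\{G_k\}\Delta t_k$, the increments $tr\{G_k\Delta w_k\Delta w_k^T\}-tr\{G_k\}\Delta t_k$ form a martingale-difference array with conditional second moments of order $(\Delta t_k)^2$ (using $\mathbb E[(\Delta w_k^{(i)})^2(\Delta w_k^{(j)})^2]\le 3(\Delta t_k)^2$), so this martingale sum vanishes in $L^2$, while $\sum_k tr\{G_k\}\Delta t_k$ is a Riemann sum converging a.s. to $\int_0^t tr\{f_{xx}''BB^T\}\,ds$.

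Collecting (i)--(iii) and letting $\delta_n\to 0$ — along a subsequence if one wishes to upgrade convergence in probability to almost-sure convergence — gives the integrated identity on $\{t\le\tau_M\}$, and letting $M\to\infty$ then removes the localization. I expect the main obstacle to be precisely this identification of the quadratic-variation term: one must control the coefficient oscillations $B_k$ versus $B(s,\xi(s))$ and $f_{xx}''(t_k,\xi(t_k))$ versus $f_{xx}''(s,\xi(s))$ uniformly over the partition, which is where the a.s. continuity of $\xi$ and the uniform continuity secured by localization are indispensable; a secondary technical nuisance is making the localization argument rigorous so that the stopped-process identity genuinely passes to the limit.
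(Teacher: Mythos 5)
The paper does not prove Lemma~\ref{lA.1} at all --- it is the classical It\^{o} formula, imported verbatim from the cited reference --- so there is no in-paper argument to compare against. Your sketch is the standard textbook proof (localization by stopping times, telescoping over a partition, second-order Taylor expansion, and identification of the quadratic-variation term via the martingale-difference/fourth-moment computation $\mathbb{E}\bigl[tr\{G_k\Delta w_k\Delta w_k^T\}\mid\mathcal F_{t_k}\bigr]=tr\{G_k\}\Delta t_k$ with conditional variances of order $(\Delta t_k)^2$), and the architecture and all the key estimates are right.

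One caveat is worth flagging. Your control of the remainder $\rho_k$, of the boundedness of $G_k=B_k^T f_{xx}''B_k$, and of the Riemann sum $\sum_k tr\{G_k\}\Delta t_k$ all invoke ``the vanishing oscillation of $a,B$ on the compact path,'' i.e.\ continuity of the processes $s\mapsto a(s,\xi(s))$ and $s\mapsto B(s,\xi(s))$. The lemma's hypotheses grant only adaptedness and the integrability conditions $\|a\|^{1/2}\in\mathcal P_t$, $\|B\|\in\mathcal P_t$; in particular your stopping time caps $\|\xi\|$ and $\int_0^t(\|a\|+\|B\|^2)\,ds$ but not the pointwise values $a(t_k,\xi(t_k))$, $B(t_k,\xi(t_k))$, and left-endpoint Riemann sums of a merely measurable integrand need not converge to the Lebesgue integral. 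The standard repair is one extra approximation layer: prove the formula first for simple (piecewise-constant) or continuous coefficients exactly as you do, then pass to general $a,B$ by approximating in $L^1(dt)$ and $L^2(dt)$ respectively and using the It\^{o} isometry plus continuity of $f,f_x',f_{xx}''$ to pass to the limit in the integrated identity. Alternatively, one can avoid evaluating $B$ at partition points altogether by writing $\xi$ as a continuous semimartingale and using the general convergence $\sum_k f_{xx}''(t_k,\xi(t_k))\,\Delta M_k\Delta M_k^T\to\int f_{xx}''\,d\langle M\rangle$. With that layer added, your proof is complete.
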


\subsection{Distributed LS Algorithm}\label{Distributed LS Algorithm}
Consider a network of $N$ sensors whose dynamics obey the following continuous-time stochastic differential equations with a general form,
\begin{align} \label{2.1}
	a(S)y_{i}(t)=Sb(S)u_{i}(t)+c(S)v_{i}(t), i\in\{1,\cdots,N\},
\end{align}
where $y_i(0)=u_i(0)=0$, $S$ is the integral operator (i.e, $Sy_i(t)=\int_0^ty_i(s)ds$),  $y_{i}(t)$ and $u_{i}(t)$ are scalar output and input of the agent $i$ at time $t$, and $a(S)$, $b(S)$ and $c(S)$ are three polynomials of the integral operator $S$  with unknown coefficients $\{a_i,b_j,c_l, 1\leq i\leq p,0\leq j\leq q,1\leq l\leq r\}$,
\begin{align}
	a(S)&=1+a_1S+\cdots+a_pS^p,\ p \geq 0, \notag\\
	b(S)&=b_1+b_2S+\cdots+b_qS^{q-1},\ q \geq 1, \notag\\
	c(S)&=1+c_1S+\cdots+c_rS^r,\ r \geq 0.\notag
\end{align}
In \eqref{2.1}, the system noise $v_{i}(t)$ is generated from a standard Wiener process $\{w_{i}(t),\mathcal {F}_t\}$, i.e.,
\begin{align} \label{zs}
	d(S)v_{i}(t)= w_{i}(t),\  t \geq 0,
\end{align}
where $\{\mathcal {F}_t, t\geq0\}$ is a family of nondecreasing $\sigma$-algebras defined as  $\mathcal {F}_t\triangleq\sigma\{y_i(s), u_i(s), v_i(s),i=1,\cdots,N,0\leq s\leq t\}$, and $d(S)=1+d_1S+\cdots+d_rS^r,\ r \geq 0$ is a known stable filter.
Denote the collection of unknown coefficients in \eqref{2.1} as
$\theta=[-a_1, \cdots, -a_p,  b_1, \cdots, b_q, c_1, \cdots, c_r]^T$.
Correspondingly, denote the regressor as
\begin{align}\label{phi0}
	\phi_i^{0}(t)=&[y_{i}(t), \cdots, S^{p-1}y_{i}(t), u_{i}(t), \cdots, S^{q-1}u_{i}(t), \notag\\
	&v_{i}(t), \cdots, S^{r-1}v_{i}(t)]^T.
\end{align}
Then, the system \eqref{2.1} can be rewritten into the following linear time-invariant regression
model,
\begin{align}\label{xt} 
	y_{i}(t)=S\theta^T\phi_i^{0}(t)+v_{i}(t),\ t\geq0.
\end{align}

The dynamics of many systems in engineering practice can be written as \eqref{2.1} according to the laws of physics, such as mechanical arm system, mass-spring-damping system, RLC circuit system, etc. The estimation of unknown parameters obtained by using input and output signals of the continuous-time systems is highly interpretable and has practical physical significance. Furthermore, it is of great significance for system fault diagnosis and system operating life prediction \citep{rao2006identification}.

The purpose of this paper is to design the distributed algorithm for all sensors to cooperatively estimate the unknown parameter vector $\theta$ by utilizing the local signals $\{y_j(s),u_j(s),0\leq s\leq t\}_{j\in {N_i}}$. We know that the least squares (LS) algorithm has attracted much attention of researchers due to fast convergence rate and widely applications in engineering systems. In this paper, we put forward the distributed LS (DLS) algorithm by combining the continuous-time LS algorithm with the diffusion of local information at the discrete-time instants $t_0=0, t_1, \cdots, t_k,\cdots$, which results in a hybrid algorithm. The details of the algorithm are described in the following Algorithm \ref{algorithm1}.

\begin{algorithm}[!ht]
	\caption{DLS Algorithm}\label{algorithm1}
	For the agent $i\in\{1,\cdots,N\}$, we begin with an initial estimate $\theta_{0,i}(0)$ and an initial  positive definite matrix $P_{0,i}(0)$. The time instants for the diffusion process are denoted as $0=t_0, t_1, t_2, \cdots.$
	
	\textbf{Step 1: Adaptation.} For  $t\in [t_k,t_{k+1})$, ${\theta_{k,i}(t)}$ and ${P_{k,i}(t)}$ are generated according to the following equations,
	\begin{align}\label{00y}
		& d {\theta_{k,i}(t)}={P_{k,i}(t)}\phi_{k,i}(t)d(S)\{d{y_i}(t)-\phi_{k,i}^T(t) {\theta_{k,i}(t)}dt\}, \\
		&\label{000y} d {P_{k,i}(t)}=-{P_{k,i}(t)}\phi_{k,i}(t)\phi_{k,i}^T(t){P_{k,i}(t)}dt,
	\end{align}
	where \begin{align}\label{phi1}
		\phi_{k,i}(t)&=[y_{i}(t),\cdots,S^{p-1}y_{i}(t),u_{i}(t),\cdots,S^{q-1}u_{i}(t),\notag\\
		&\ \ \ \ \ \ \ \ \hat {v}_{k,i}(t), \cdots,S^{r-1}\hat{v}_{k,i}(t)]^T,\\ \label{vtij}
		\hat{v}_{k,i}(t)&=y_{i}(t)-S\phi_{k,i}^T(t)\theta_{k,i}(t).
	\end{align}
	
	\textbf{Step 2: Diffusion.}  At the time instant $t_{k+1}$,   $P_{k+1,i}^{-1}(t_{k+1})$ and $\theta_{k+1,i}(t_{k+1})$ are updated by the following equations,
	\begin{align} 
		\label{rh}P_{k+1,i}^{-1}(t_{k+1})&=\sum_{j \in N_i}{a_{ij}}P_{k,j}^{-1}(t_{k+1}),\\
		\label{rh1}\theta_{k+1,i}(t_{k+1})&= P_{k+1,i}(t_{k+1})\sum_{j \in N_i}a_{ij}P_{k,j}^{-1}(t_{k+1})\theta_{k,j}(t_{k+1}),
	\end{align}
	where  $P_{k,i}^{-1}(t_{k+1})$ and $\theta_{k,i}(t_{k+1})$ are obtained by \textbf{Step 1}.
\end{algorithm}

For $r=0$, the stochastic differential equations \eqref{00y}-\eqref{rh1} have a unique strong solution if for  $i\in \{1,2,\cdots,N\}$, $P\big(\int_{t_k}^{t_{k+1}}\|P_{k,i}(t)\phi_{k,i}(t)\phi_{k,i}^T(t)\|dt<\infty\big)=1$ holds for all positive integer $k$.
For the general case of $r>0$, the existence and uniqueness of the strong solution of stochastic differential equations \eqref{00y}-\eqref{rh1} become challenging, and we will not discuss this issue in this paper. To proceed the analysis of the distributed algorithm,  we assume that for $i \in\{1,\cdots,N\}$, the stochastic differential equations \eqref{00y}-\eqref{rh1} have a unique strong solution $\{\theta_{k,i}(t), t\in [t_k,t_{k+1}), k=0,1,\cdots\}$.

\begin{remark}\label{remark} For the case of $d(S)=1$, we can obtain  Algorithm \ref{algorithm1} by minimizing the following ``accumulative prediction error"  for $t\in [t_{n},t_{n+1}) (n=0,1,\cdots$),
	\begin{align}		\delta_{t,i}(\theta)=&\sum_{j=1}^N\sum_{k=0}^{n-1}a_{ij}^{(n-k+1)}\int_{t_k}^{t_{k+1}}\big\{\frac{dy_j(s)}{ds}-\theta^T\phi_{k,j}(s)\big\}^2ds\notag\\&+\int_{t_{n}}^t\big\{\frac{dy_i(s)}{ds}-\theta^T\phi_{n+1,i}(s)\big\}^2ds.\notag
	\end{align}
	%
\end{remark}
\section{Performance Analysis of Algorithm \ref{algorithm1}.}\label{Convergence}
For convenience of analysis of Algorithm \ref{algorithm1}, we introduce some notations in Table \ref{biao1},
where col$\{\cdots\}$ denotes the vector stacked by the specified vectors, $\rm{diag}\{\cdots\}$  denotes the block diagonal matrix with each block being the corresponding vectors or matrices and $\bigotimes$ represents the Kronecker product.

By the notations in Table \ref{biao1}, the continuous-time dynamical systems \eqref{zs} and \eqref{xt} can be written into the following matrix form,
\begin{align}
	\label{ytw}&Y(t)=S(\Phi^0(t))^T\Theta+V(t),\\ \label{ytw1}&d(S)V(t)=W(t).\end{align} 	
For $t\in[t_k,t_{k+1})$, (\ref{00y}) and (\ref{000y}) in Algorithm \ref{algorithm1} can be written into the following equations,
\begin{align}\label{sf1}
	\left\{
	\begin{array}{ll}
		d\Theta_k(t)=P_k(t)\Phi_k(t)d(S)(dY(t)-\Phi_k^T(t)\Theta_k(t)dt)\\
		dP_k(t)=- P_k(t)\Phi_k(t)\Phi_k^T(t)P_k(t)dt
	\end{array}
	\right.,
\end{align}
and (\ref{rh}) and (\ref{rh1}) can be written as
\begin{equation}\label{sf}
	\left\{
	\begin{array}{ll}
		{\vecc} \{P_{k+1}^{-1}(t_{k+1})\}=\mathscr{A}{\vecc}\{P_{k}^{-1}(t_{k+1})\}\\
		\Theta_{k+1}(t_{k+1})=P_{k+1}(t_{k+1})\mathscr{A}P_{k}^{-1}(t_{k+1})\Theta_{k}(t_{k+1})
	\end{array}
	\right.,
\end{equation}
where ${\rm {vec}}\{\cdot\}$ represents the operator that stacks the block matrices on top of each other.
\begin{table}\label{tab1}
	\caption{Some Notations}
	\label{biao1}
	\begin{center}
		\setlength{\tabcolsep}{3pt}
		\begin{tabular}{|c|c|}
			\hline
			Notation & Definition \\
			\hline
			$Y(t)$     & $(\it{y_{\rm1}(t),\cdots,y_N(t)})^{T}$  \\
			$\Phi^0(t)$&$\rm{diag}\{\phi_1^{0}(\it t),\cdots,\phi_N^{\rm{0}}(\it t)\}$\\
			$V(t)$     &$(\it{v_{\rm1}(t),\cdots,v_N(t)})^{T}$\\
			$\hat{V}_k(t)$&$(\it{\hat{v}_{k,{\rm1}}(t),\cdots,\hat{v}_{k,N}(t)})^T$\\
			$\widetilde{V}_k(t)$&$(\it{\widetilde{v}_{k,{\rm1}}(t),\cdots,\widetilde{v}_{k,N}(t)})^T,\it{\widetilde{v}_{k,i}(t)=v_{i}(t)-\hat{v}_{k,i}(t)}$ \\
			$W(t)$&$(\it{w_{\rm1}(t),\cdots,w_N(t)})^{T}$\\
			$\Theta$   & $\rm{col}\{\theta,\cdots,\theta\}$  \\
			$\Theta_k(t)$   & $\rm{col}\{\theta_{\it k,\rm1}(\it t),\cdots,\theta_{\it k,N}(\it t)\}$  \\
			$\widetilde{\Theta}_k(t)$&$\rm{col}\{\widetilde{\theta}_{\it k,\rm1}(\it t),\cdots,\widetilde{\theta}_{\it k,\it N}(\it t)\}$,$\widetilde{\theta}_{k,i}(t)=\theta-\theta_{k,i}(t)$\\
			$\Phi_k(t)$&$\rm{diag}\{\phi_{\it k,\rm1}(\it t),\cdots,\phi_{k,\it N}(t)\}$\\
			$\widetilde{\Phi}_k(t)$&$\rm{diag}\{\widetilde{\phi}_{\it k,\rm1}(\it t),\cdots,\widetilde{\phi}_{k,\it N}(\it t)\},\widetilde{\phi}_{k,i}(t)=\phi_i^{\rm{0
			}}(t)-\phi_{k,i}(t)$\\
			$P_k(t)$&$\rm{diag}\{\it{P_{k,\rm{1}}(t),\cdots,P_{k,\it N}(t)}\}$\\
			$\mathscr{A}$&$\mathcal{A}\otimes I_{(p+q+r)}$,~~$\mathcal{A}$ is the weighted adjacency matrix\\
			\hline
		\end{tabular}
	\end{center}
\end{table}

To proceed with the performance analysis of the algorithm, we need to introduce some assumptions concerning the network topology, the polynomials $c(S)$ and $d(S)$ and the regression vectors.
\begin{assumption} \label{c1}
	The graph $\mathcal G$ is undirected and connected.
\end{assumption}

In fact, the above assumption for the communication graph can be generalized to the case where $\mathcal G$ is a strongly connected balanced directed graph, and the corresponding analysis is similar to the undirected graph case. Thus, we just provide the analysis of Algorithm \ref{algorithm1} under Assumption \ref{c1}.

In the following, we introduce the assumption of the strictly positive real property concerning the polynomials, which is often utilized to deal with accumulative correlated noise in the stability analysis of adaptive control systems and convergence analysis of system identification.

\begin{assumption}\label{c2}
	The polynomial $d(S)c^{-1}(S)-\frac{1}{2}$ is strictly positive real.
\end{assumption}

\begin{remark}\label{rr3.1} By the strictly positive real property of the polynomial $d(S)c^{-1}(S)-\frac{1}{2}$, we see that for the functions $g(t)$ and $f(t)$ satisfying  $f(t)=(d(S)c^{-1}(S)-\frac{1}{2})g(t)$, there are positive constants $m_1$ and $m_2$ such that the inequality $
	\int_{0}^t g^T(\tau)\big\{f(\tau)-m_1 g(\tau)\big\}d\tau+m_2>0$ holds for all $t\geq0$ (cf., \cite{2006Continuous}).\end{remark}

%

By (\ref{xt}), we see that if all the regressor vectors $\phi_i^{0}(t)$ are equal to $0$, then the unknown parameter vector $\theta$ cannot be identified because the observed signals $y_i(t)$ do not contain any information about $\theta$. Therefore, we introduce the following excitation condition on the regression signals to estimate $\theta$. 

\begin{assumption} [Cooperative Excitation Condition]\label{c3} The regressor vectors satisfy the following condition,
	\begin{equation}
		\lim\limits_{n\to \infty} \frac{\log R(n)}{\lambda^n_{\min}} =0\ \ \  a.s.,\notag
	\end{equation} where 
 \begin{align}\label{jl}
		R(n)\triangleq&\sum_{j=1}^N\left\{\sum_{k=0}^{n}\int_{t_k}^{t_{k+1}}\|\phi_{k,j}(s)\|^2ds\right\}
		+\lambda_{\max}\left(P_0^{-1}(0)\right),
	\end{align}
	and $\lambda_{\min}^n\triangleq$
	\begin{align}\label{lambdamin}
		\lambda_{\min}\left\{\sum_{j=1}^N \sum_{k=0}^{n-D_{\mathcal{G}}}  \int_{t_k}^{t_{k+1}}\phi_{k,j}(s)\phi_{k,j}^T(s)ds+\sum_{j=1}^N P_{0,j}^{-1}\right\}.
	\end{align}
\end{assumption}
\begin{remark}\label{r3.1} For the single agent case,
	\cite{2006Continuous} proved the convergence of continuous-time LS algorithm under the following excitation condition,
	\begin{align}\label{jd}
		\lim\limits_{t\to \infty} \frac{\log \Big(e+\int_0^t\|\phi_{k,i}(s)\|^2ds\Big)}{\lambda_{\min}\Big\{P_{0,i}^{-1}+\int_0^t\phi_{k,i}(s)\phi_{k,i}^T(s)ds\Big\}} =0\ a.s..
	\end{align}
	The Cooperative Excitation Condition (Assumption  \ref{c3}) can be degenerated to \eqref{jd} when $N=1$ and $D_{\mathcal{G}}=1$.
	It is clear that Assumption \ref{c3} is much weaker than the cooperative PE condition commonly used for the convergence of parameter estimation of continuous-time systems \citep{6578135,2012Continuouss,papusha2014collaborative, javed2021excitation},
	\begin{align}
		\int_t^{t+T_0}\left[\sum_{i=1}^N\phi_{k,i}(s)\phi_{k,i}^T(s)\right]	ds\geq\alpha I_{p+q+r},\forall t\geq0,\notag
	\end{align}
	where $T_0$ and $\alpha$ are two positive constants.
\end{remark}

\begin{remark} The Cooperative Excitation Condition can reflect the joint effect of multiple sensors in a sense that even if any individual can not finish the estimation task, they can in a cooperative way. We will reveal this point
	by a simulation example given in Section \ref{SIMULATION}.
\end{remark}

Before presenting the main theorems, we first introduce some key lemmas. The lemma given below aims to deal with the accumulative correlated noise of all the individuals in the system.


\begin{lemma}\label{l2}
	Under Assumption \ref{c2}, there exist positive constants $m_1$ and $m_2$ such that the following inequality holds
	\begin{align}\label{STR} &\sum_{k=0}^{n-1}\int_{t_k}^{t_{k+1}}{g_k^T(\tau)\big(f_k(\tau)-m_1 g_k(\tau)\big)d\tau}\notag\\&+\int_{t_n}^t{g_n^T(\tau)\big(f_n(\tau)-m_1 g_n(\tau)\big)d\tau}+m_2>0,\end{align}
	where
	$g_k^T(t)=\widetilde{\Theta}_k^T(t)\Phi_k(t)$ and  $f_k(t)=\frac{1}{2}g_k(t)+\frac{c(S)-d(S)}{S}\widetilde{V}_k(t)$ with $\widetilde{V}_k(t)$  being
	defined in Table I.
	
\end{lemma}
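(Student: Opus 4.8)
The plan is to reduce the claimed accumulated inequality to the single-agent strictly positive real estimate recorded in Remark \ref{rr3.1}, applied componentwise across the $N$ sensors, and then to glue the resulting per-agent inequalities together over the diffusion intervals. First I would unpack the definitions: $g_k(t) = \Phi_k^T(t)\widetilde{\Theta}_k(t)$ has $i$-th component $\phi_{k,i}^T(t)\widetilde{\theta}_{k,i}(t)$, and $f_k(t) - \tfrac12 g_k(t) = \frac{c(S)-d(S)}{S}\widetilde{V}_k(t)$, whose $i$-th component is $\big(\frac{c(S)-d(S)}{S}\big)\widetilde{v}_{k,i}(t)$. The quantity $g_k^T(\tau)(f_k(\tau) - m_1 g_k(\tau))$ therefore splits as a sum over $i$ of scalar terms $g_{k,i}(\tau)\big(f_{k,i}(\tau) - m_1 g_{k,i}(\tau)\big)$, so it suffices to establish, for each fixed $i$, a lower bound of the form $-m_2/N$ for the corresponding accumulated integral; summing over $i$ then gives \eqref{STR}.

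The core observation is that, within each adaptation interval $[t_k,t_{k+1})$, the pair $(g_{k,i}, \frac{c(S)-d(S)}{S}\widetilde{v}_{k,i})$ plays exactly the role of $(g, f - \tfrac12 g)$ in Remark \ref{rr3.1}: indeed, from the regression model \eqref{xt} and the innovation definition \eqref{vtij}, one has $\widetilde{v}_{k,i}(t) = v_i(t) - \hat v_{k,i}(t) = S\phi_{k,i}^T(t)\widetilde\theta_{k,i}(t) - (S\widetilde\phi_{k,i}^T(t)\theta + \text{noise correction terms})$, and after applying the filter $d(S)c^{-1}(S)-\tfrac12$ the structure $f_{k,i} = (d(S)c^{-1}(S)-\tfrac12)g_{k,i}$ emerges in the sense of Remark \ref{rr3.1}. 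Thus by the strictly positive real property (Assumption \ref{c2}), there exist $m_1>0$ and a constant (call it $\bar m_2 > 0$) such that $\int_0^t g_{k,i}^T(\tau)(f_{k,i}(\tau)-m_1 g_{k,i}(\tau))d\tau + \bar m_2 > 0$ for all $t\ge 0$. I would take $m_2 = N\bar m_2$ (or a slightly larger constant absorbing the boundary contributions at the diffusion instants $t_k$).

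The step I expect to be the main obstacle is handling the discontinuities introduced by the diffusion step \eqref{rh}–\eqref{rh1} at the times $t_k$. The single-agent result of Remark \ref{rr3.1} is stated for a process evolving on $[0,\infty)$ without resets, but here $\theta_{k,i}$ is reinitialized at each $t_k$ via the convex-combination diffusion update, so $g_{k,i}$ is only piecewise defined and the ``free'' SPR inequality must be re-derived on each interval with a fresh initial error $\widetilde\theta_{k+1,i}(t_{k+1})$. The key is to show that the additive constants generated at each restart do not accumulate: because the diffusion update \eqref{rh1} is an averaging operation with a doubly stochastic weight matrix, the energy $\sum_i \widetilde\theta_{k+1,i}^T(t_{k+1}) P_{k+1,i}^{-1}(t_{k+1}) \widetilde\theta_{k+1,i}(t_{k+1})$ is non-increasing across the diffusion step (a standard convexity/Jensen argument using $\sum_j a_{ij} P_{k,j}^{-1} = P_{k+1,i}^{-1}$), so the telescoping of boundary terms is controlled by a single constant depending only on the initial data $P_{0,i}^{-1}(0)$ and $\theta_{0,i}(0)$. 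I would therefore: (i) write the per-interval SPR inequality with explicit dependence on the restart error; (ii) sum over $k = 0,\dots,n-1$ plus the final partial interval $[t_n,t]$; (iii) bound the telescoped boundary terms using the monotonicity of the weighted error energy under diffusion; and (iv) collect everything into a single constant $m_2$, yielding \eqref{STR}. The routine Ito-calculus manipulations needed to verify that $\widetilde{v}_{k,i}$ has the claimed SPR-compatible form on each interval I would carry out but not belabor here.
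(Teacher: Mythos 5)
Your overall strategy coincides with the paper's: identify $f_k$ with $(d(S)c^{-1}(S)-\tfrac12)g_k$ and invoke the strictly positive real property of Remark \ref{rr3.1}. But the proposal has a genuine gap at exactly the point where the lemma's content lies. The identity $f_k(\tau)=(d(S)c^{-1}(S)-\tfrac12)g_k(\tau)$ is not something that ``emerges'' after routine Ito manipulations --- no Ito calculus is involved at all. It follows from a purely algebraic chain: from \eqref{xt} and \eqref{vtij} one gets $\widetilde v_{k,i}=-S\widetilde\phi_{k,i}^T\theta-S\phi_{k,i}^T\widetilde\theta_{k,i}$ with the Wiener noise cancelling \emph{exactly} (your ``noise correction terms'' do not exist, which suggests you have not checked this cancellation); then, because the only nonzero block of $\widetilde\phi_{k,i}$ is the $\hat v$-block, $\widetilde\phi_{k,i}^T\theta=\frac{c(S)-1}{S}\widetilde v_{k,i}$, which yields $c(S)\,d\widetilde V_k(\tau)=-\Phi_k^T(\tau)\widetilde\Theta_k(\tau)\,d\tau$ and hence $\widetilde V_k=-Sc^{-1}(S)g_k$. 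This derivation is the heart of the lemma and cannot be deferred.

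The second problem is your treatment of the diffusion instants. If you really apply the SPR inequality afresh on each interval $[t_k,t_{k+1})$ with a constant $\bar m_2$ per interval, summing gives $n\bar m_2$, which grows with $n$ and does not prove \eqref{STR} with a fixed $m_2$. Your proposed repair --- telescoping controlled by the monotonicity of $\sum_i\widetilde\theta_{k,i}^TP_{k,i}^{-1}\widetilde\theta_{k,i}$ across the diffusion step --- is the wrong tool: that quantity is the Lyapunov function used later in Theorem \ref{th1} (via Lemma \ref{l4}), whereas the additive constant in the passivity inequality of Remark \ref{rr3.1} is governed by the internal state of a realization of the filter $\frac{c(S)-d(S)}{S}$ driven by $g$, i.e.\ by $\widetilde V_k$ and its iterated integrals, not by the weighted estimation error. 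The correct observation is that no per-interval decomposition is needed: $S$ integrates the concatenated process from time $0$, so $\hat V$, $\widetilde V$ and the filter state are continuous across $t_{k+1}$ (only $\Theta_k$, and hence $g$, jumps, which is harmless), the relation $f=(d(S)c^{-1}(S)-\tfrac12)g$ holds for the concatenated process on all of $[0,t]$, and a single application of Remark \ref{rr3.1} produces one constant $m_2$. As written, your step (iii) does not close the gap it is meant to close.
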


\begin{proof} By the notations in Table \ref{biao1}, we have for $\tau\in [t_k, t_{k+1})$ ($k\geq 0$)\begin{align}\label{thw11}
	&\widetilde{\Phi}_k(\tau)=\Phi^0(\tau)-\Phi_k(\tau),\
	\widetilde{\Theta}_k(\tau)=\Theta-\Theta_k(\tau),\\ &\label{hv}\hat{V}_k(\tau)=Y(\tau)-S\Phi_k^T(\tau)\Theta_k(\tau).
\end{align}
By (\ref{ytw}) and (\ref{thw11}), the dynamics of the system have the following expression,
\begin{align}\label{wi}
	&dY(\tau)\notag\\=&\widetilde{\Phi}_k^T(\tau)\Theta d\tau+\Phi_k^T(\tau)\Theta d\tau+dV(\tau) \\
	=&\widetilde{\Phi}_k^T(\tau)\Theta d\tau+\Phi_k^T(\tau)\widetilde{\Theta}_k(\tau)+\Phi_k^T(\tau)\Theta_k(\tau)+dV(\tau).\notag
\end{align}
Differentiating  both sides of \eqref{hv} yields
\begin{align}\label{differ}
	\Phi_k^T(\tau)\Theta_k(\tau)=dY(\tau)-d\hat{V}_k(\tau).
\end{align}
Substituting (\ref{differ}) into \eqref{wi}, we can obtain 
\begin{align}\label{aa}
	&\widetilde{\Phi}_k^T(\tau)\Theta d\tau+\Phi_k^T(\tau)\widetilde{\Theta}_k(\tau)d\tau\notag\\&=d\hat{V}_k(\tau)-dV(\tau)=-d\widetilde{V}_k(\tau).
\end{align}
By the definition of $	\widetilde{\Phi}_k(t)$ and $\Theta$, we have
\begin{align}
	&\widetilde{\Phi}_k(\tau)=\rm{diag}\big\{[\rm0,\cdots,\rm0, \tilde {\it v}_{\it k,\rm1}(\it \tau),\cdots,S^{\it r-\rm1}\tilde{\it v}_{k,{\rm1}}(\it \tau)]^{\it T},\notag\\&\ \ \ \ \ \ \ \ \ \ \ \ \ \cdots, [\rm0,\cdots,\rm0, \tilde {\it v}_{\it k,\it N}(\it \tau),\cdots,S^{\it r-\rm1}\tilde{\it v}_{k,\it N}(\it \tau)]^{\it T}\big\},\notag \\
	&\Theta=\rm{col}\it{\big\{[-a_{\rm1} \cdots -a_p\ b_{\rm1} \cdots b_q\ c_{\rm1} \cdots c_r],}\notag\\& \ \ \ \ \ \ \ \ \ \ \ \ \ {\cdots, [-a_{\rm1} \cdots -a_p\ b_{\rm1}\cdots b_q\ c_{\rm1} \cdots c_r]\big\}}.\notag
\end{align}
Substituting them into  \eqref{aa}, we have
\begin{align}\label{h}
	c(S)d\widetilde{V}_k(\tau)=-\Phi_k^T(\tau)\widetilde{\Theta}_k(\tau)d\tau.
\end{align}
By Assumption \ref{c2}, the following equation
\begin{align}\label{s}
	d\widetilde{V}_k(\tau)=-c^{-1}(S)\Phi_k^T(\tau)\widetilde{\Theta}_k(\tau)d\tau
\end{align}
holds.
By \eqref{s} and the definition of $f_k(t)$, we have
\begin{align}
	f_k(\tau)=[d(S)c^{-1}(S)-\frac{1}{2}]\Phi_k^T(\tau)g_k(\tau).\nonumber
\end{align}  Under Assumption \ref{c2}, we see that there exist two positive constants $m_1$ and $m_2$ such that
\begin{align}\label{spr}
	\int_{t_k}^{t_{k+1}}{g_k^T(\tau)\big(f_k(\tau)-m_1 g_k(\tau)\big)d\tau}+m_2>0
\end{align}
holds for all positive integer $k$
and \begin{align}
	\int_{t_n}^t{g_n^T(\tau)\big(f_n(\tau)-m_1 g_n(\tau)\big)d\tau}+m_2>0.\nonumber
\end{align}
Summing both sides of the equation (\ref{spr}) from $0$ to $n-1$, we can
obtain the result of the lemma.
\end{proof}


The next lemma introduced in \cite{xie2020convergence} can be used to deal with the impact of neighbor relations on the convergence of the algorithm, and we will list it here.
\begin{lemma} \citep{xie2020convergence}\label{l4} 
	For the matrices $P_{k+1}(t_{k+1})$ and $P_{k}(t_{k+1})$ defined in \eqref{sf}, we have the following  inequality,
	\begin{align}
		&\mathscr{A}P_{k+1}(t_{k+1})\mathscr{A}\leq P_{k}(t_{k+1}),\notag\\
		&\big|P_{k}^{-1}(t_{k+1})\big|\leq\big|P_{k+1}^{-1}(t_{k+1})\big|.\notag
	\end{align}
\end{lemma}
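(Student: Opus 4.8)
\emph{Proof idea.} The plan is to reduce the two matrix inequalities to per‑block statements and then invoke two standard convexity facts together with the double stochasticity of $\mathcal{A}$. First I would record the block structure implied by \eqref{sf}: along $[t_k,t_{k+1})$ the Riccati equation \eqref{000y} keeps each $P_{k,i}(t)$ symmetric and positive definite, since $P_{k,i}^{-1}(t)=P_{k,i}^{-1}(t_k)+\int_{t_k}^{t}\phi_{k,i}(s)\phi_{k,i}^{T}(s)ds$ is nondecreasing; and the diffusion step gives $P_{k+1,i}^{-1}(t_{k+1})=\sum_{j\in N_i}a_{ij}P_{k,j}^{-1}(t_{k+1})$, a convex combination (with at least one positive weight) of positive definite matrices, hence positive definite. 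So $P_{k+1}(t_{k+1})$, $P_k(t_{k+1})$ and their inverses are all block diagonal and positive definite. Write $\bar Q_j:=P_{k,j}^{-1}(t_{k+1})$ and $Q_i:=P_{k+1,i}^{-1}(t_{k+1})=\sum_j a_{ij}\bar Q_j$, and recall that $\{a_{ij}\}_j$ is nonnegative with $\sum_j a_{ij}=1$ and also $\sum_i a_{ij}=1$.

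For the first inequality I would test against an arbitrary block vector $x=\mathrm{col}\{x_1,\dots,x_N\}$. Since $\mathscr A=\mathcal A\otimes I_{p+q+r}$ is symmetric and $(\mathscr A x)_i=\sum_j a_{ij}x_j$, one gets
\begin{align}
x^{T}\mathscr A P_{k+1}(t_{k+1})\mathscr A x=\sum_{i=1}^{N}\Big(\sum_{j}a_{ij}x_j\Big)^{T}\Big(\sum_{l}a_{il}\bar Q_l\Big)^{-1}\Big(\sum_{j}a_{ij}x_j\Big).\notag
\end{align}
Now I would use the weighted Jensen inequality for the jointly convex map $(v,A)\mapsto v^{T}A^{-1}v$, namely $\big(\sum_l\alpha_l v_l\big)^{T}\big(\sum_l\alpha_l A_l\big)^{-1}\big(\sum_l\alpha_l v_l\big)\le\sum_l\alpha_l v_l^{T}A_l^{-1}v_l$ whenever $\alpha_l\ge0$, $\sum_l\alpha_l=1$ and $A_l\succ0$; this in turn follows from the variational identity $v^{T}A^{-1}v=\max_{w}\big(2w^{T}v-w^{T}Aw\big)$. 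Applying it with $\alpha_l=a_{il}$, $v_l=x_l$, $A_l=\bar Q_l$, then summing over $i$ and using $\sum_i a_{il}=1$, the right‑hand side collapses to $\sum_l x_l^{T}\bar Q_l^{-1}x_l=x^{T}P_k(t_{k+1})x$. As $x$ is arbitrary, $\mathscr A P_{k+1}(t_{k+1})\mathscr A\le P_k(t_{k+1})$.

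For the determinant inequality I would not try to pass through the first inequality (the matrix $\mathscr A$ can be singular, so taking inverses or comparing determinants there stalls), but argue directly: $|P_{k+1}^{-1}(t_{k+1})|=\prod_i\det Q_i=\prod_i\det\big(\sum_j a_{ij}\bar Q_j\big)$ while $|P_k^{-1}(t_{k+1})|=\prod_i\det\bar Q_i$. Using the concavity of $\log\det$ on the positive definite cone, $\log\det\big(\sum_j a_{ij}\bar Q_j\big)\ge\sum_j a_{ij}\log\det\bar Q_j$; summing over $i$, using $\sum_i a_{ij}=1$, and exponentiating yields $\prod_i\det Q_i\ge\prod_j\det\bar Q_j$, i.e. $|P_k^{-1}(t_{k+1})|\le|P_{k+1}^{-1}(t_{k+1})|$. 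I expect the only real care needed — the main obstacle — to be the bookkeeping: unpacking the Kronecker/block structure correctly and checking at each step that the matrices involved stay symmetric positive definite, so that the Loewner‑order inequality, the determinant monotonicity, and the two convexity estimates are all legitimate. Since the statement is quoted verbatim from \cite{xie2020convergence}, one may alternatively simply cite it.
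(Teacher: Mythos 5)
Your proposal is correct. Note first that the paper itself gives no proof of this lemma --- it is imported verbatim from \cite{xie2020convergence} --- so there is no in-paper argument to compare against; what you have written is a legitimate self-contained derivation. Both halves check out: for the Loewner inequality, testing against $x=\mathrm{col}\{x_1,\dots,x_N\}$ and applying Jensen's inequality for the jointly convex map $(v,A)\mapsto v^{T}A^{-1}v$ (justified by the variational identity $v^{T}A^{-1}v=\max_{w}(2w^{T}v-w^{T}Aw)$, a pointwise supremum of functions affine in $(v,A)$) with weights $a_{i1},\dots,a_{iN}$, then summing over $i$ and using column stochasticity, does collapse the bound to $x^{T}P_k(t_{k+1})x$; for the determinant inequality, concavity of $\log\det$ on the positive definite cone plus $\sum_i a_{ij}=1$ gives exactly $\prod_i\det Q_i\ge\prod_j\det\bar Q_j$. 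Your remark that the second inequality cannot be deduced from the first is also apt: even when $\mathscr A$ is nonsingular, taking determinants in $\mathscr A P_{k+1}\mathscr A\le P_k$ only yields $|P_{k+1}^{-1}|\ge|\mathcal A|^{2(p+q+r)}|P_k^{-1}|$, which is weaker since $|\mathcal A|\le 1$, so the direct $\log\det$ argument is genuinely needed. The only bookkeeping worth making explicit is that the weights $\{a_{ij}\}_{j}$ vanish off $N_i$ and sum to one over all $j$ (double stochasticity), so the neighborhood sums in \eqref{rh} extend to full sums, and that positive definiteness of each $P_{k,j}^{-1}(t_{k+1})$ follows inductively from \eqref{nn} and the positive definite initialization --- both of which you flag. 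This is essentially the same convexity machinery used in the cited reference, so citing it, as the paper does, is equally acceptable.
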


Based on the above lemmas, we can obtain the following theorem for the estimation error  without requiring any excitation condition on the regression vector $\phi_{k,i}(t)$.
\begin{theorem}\label{th1}
	Under Assumption \ref{c2}, the estimation error $\widetilde{\Theta}_k(t)$ of Algorithm 1 satisfies the  following relationship,
	\begin{align}
		1)\ &\sum_{k=0}^{n-1}\int_{t_k}^{t_{k+1}} \widetilde{\Theta}_k^T(s)\Phi_k(s)\Phi_k^T(s)\widetilde{\Theta}_k(s)ds\notag\\&+\int_{t_n}^t \widetilde{\Theta}_n^T(s)\Phi_n(s)\Phi_n^T(s)\widetilde{\Theta}_n(s)ds=O(\log R(t)),\notag\\
		2)\ &\widetilde{\Theta}^T_{n}(t)P_{n}^{-1}(t)\widetilde{\Theta}_{n}(t)= O(\log R(t)),\notag
	\end{align}
	where $R(t)$ is defined by \eqref{jl}. 	
\end{theorem}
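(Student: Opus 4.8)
The plan is to prove the two estimates by constructing a Lyapunov-type function along the trajectory and applying the Ito formula together with the martingale estimation theorem (Lemma \ref{lA.2}) and the strictly positive real inequality (Lemma \ref{l2}). Concretely, on each interval $[t_k,t_{k+1})$ I would consider the quantity $\widetilde{\Theta}_k^T(t)P_k^{-1}(t)\widetilde{\Theta}_k(t)$. From the adaptation equations \eqref{sf1} one computes $d(P_k^{-1}(t)) = \Phi_k(t)\Phi_k^T(t)\,dt$ and, using \eqref{aa} to express the innovation $d(S)(dY-\Phi_k^T\Theta_k\,dt)$ in terms of $g_k(\tau)=\Phi_k^T(\tau)\widetilde{\Theta}_k(\tau)$, of $\widetilde{V}_k$, and of the driving noise $W(t)$, one derives a stochastic differential for $\widetilde{\Theta}_k$. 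Applying Lemma \ref{lA.1} to $f(t,\widetilde{\Theta}_k)=\widetilde{\Theta}_k^T P_k^{-1}\widetilde{\Theta}_k$ then yields, after cancellation, an identity of the schematic form
\begin{align}
&d\big(\widetilde{\Theta}_k^T(t)P_k^{-1}(t)\widetilde{\Theta}_k(t)\big)
= -g_k^T(t)g_k(t)\,dt
+ 2g_k^T(t)\big(\tfrac12 g_k(t)+\tfrac{c(S)-d(S)}{S}\widetilde{V}_k(t)\big)dt\notag\\
&\hskip 1cm - 2g_k^T(t)\,d(S)\,dW_{\!*}(t)\ +\ (\text{Ito second-order term})\,dt,\notag
\end{align}
where the first two terms combine into $2g_k^T(t)\big(f_k(t)-m_1 g_k(t)\big)dt + (2m_1-1)g_k^T(t)g_k(t)\,dt$ with $f_k$ as in Lemma \ref{l2}. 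The key point is that the SPR term is handled \emph{globally in $k$} by Lemma \ref{l2}, which bounds the accumulated cross term from below by a constant, so after integrating and summing over $k=0,\dots,n-1$ it contributes only an $O(1)$ quantity; meanwhile the remaining $-c\sum_k\int g_k^Tg_k$ (for a suitable constant $c>0$, after absorbing $m_1$) is exactly the left-hand side of statement 1).

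Next I would control the stochastic integral term $\int_0^t g_k^T\,dW$-type contribution via Lemma \ref{lA.2}: its bracket $S(t)$ is bounded by a constant times $\sum_k\int g_k^T g_k\,ds$, so the martingale term is $O\big(\sqrt{S(t)\log\log S(t)}\big)=o(S(t))+O(1)$, hence absorbable into the left side. The Ito second-order term, coming from the noise $W$ entering $d\widetilde\Theta_k$, produces a term controlled by $\mathrm{tr}\,(P_k(t)\Phi_k\Phi_k^T)\,dt = -d\log|P_k^{-1}(t)|$ up to constants; integrating this gives $\log|P_k^{-1}(t_{k+1})|-\log|P_k^{-1}(t_k)|$, and here the diffusion step enters: by Lemma \ref{l4}, $|P_k^{-1}(t_{k+1})|\le |P_{k+1}^{-1}(t_{k+1})|$, so the telescoping sum of these log-determinant increments across intervals is bounded above by $\log|P_n^{-1}(t)|-\log|P_0^{-1}(0)| \le (p+q+r)\log R(t)+O(1)$, using $\lambda_{\max}(P_n^{-1}(t))\le R(t)$ which follows from the definition \eqref{jl} and the doubly-stochastic averaging in \eqref{sf}. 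Also at each diffusion instant the Lyapunov function does not increase: from \eqref{sf} and convexity (equivalently the first inequality of Lemma \ref{l4}) one gets $\widetilde{\Theta}_{k+1}^T(t_{k+1})P_{k+1}^{-1}(t_{k+1})\widetilde{\Theta}_{k+1}(t_{k+1}) \le \widetilde{\Theta}_k^T(t_{k+1})P_k^{-1}(t_{k+1})\widetilde{\Theta}_k(t_{k+1})$, so the inter-interval jumps only help. Collecting all pieces: the accumulated excitation $\sum_k\int g_k^Tg_k$ plus the terminal $\widetilde{\Theta}_n^TP_n^{-1}\widetilde{\Theta}_n$ is bounded by $O(\log R(t))$ plus $o$ of itself plus $O(1)$, which upon rearrangement gives both 1) and 2) simultaneously.

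The main obstacle I anticipate is the careful bookkeeping at the diffusion instants $t_k$: one must verify that replacing $P_k^{-1},\Theta_k$ by the averaged $P_{k+1}^{-1},\Theta_{k+1}$ via the doubly-stochastic matrix $\mathscr A$ does not increase the Lyapunov function (this is where $\mathscr A$ being doubly stochastic, hence the first inequality of Lemma \ref{l4}, is essential) and that the log-determinant terms telescope correctly across the jumps using the second inequality of Lemma \ref{l4}. A secondary technical point is the treatment of the filtered innovation $d(S)\{\cdot\}$ and the term $\frac{c(S)-d(S)}{S}\widetilde{V}_k$: one has to be precise that, by \eqref{s}, $\widetilde V_k$ is driven by $g_k$ through the stable filter $c^{-1}(S)$, so that the SPR structure of Lemma \ref{l2} is genuinely applicable with the \emph{same} constants $m_1,m_2$ uniformly over $k$, and that all the $\int\|M_s\|^2ds<\infty$ integrability hypotheses of Lemmas \ref{lA.2} and \ref{lA.1} hold under the standing strong-solution assumption. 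Once these structural facts are in place, the remaining manipulations are routine Gronwall-free rearrangements.
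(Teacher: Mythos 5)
Your proposal follows essentially the same route as the paper's proof: the Lyapunov function $\widetilde{\Theta}_k^T(t)P_k^{-1}(t)\widetilde{\Theta}_k(t)$ analyzed via the Ito formula, Lemma \ref{l2} to reduce the accumulated SPR cross term to $O(1)$, Lemma \ref{lA.2} for the stochastic integral, Lemma \ref{l4} both for the non-increase of the Lyapunov function at the diffusion instants and for the log-determinant telescoping, and finally $\lambda_{\max}\{P_{n}^{-1}(t)\}=O(R(t))$ to get the $\log R(t)$ bound. Apart from sign slips in your schematic identity (the drift actually contributes $-2g_k^T(\tau)f_k(\tau)d\tau$ after the $g_k^Tg_k$ cancellation, so the quadratic term that survives on the left-hand side is $+2m_1\|g_k\|^2$, requiring only $m_1>0$), the argument is the paper's.
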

\begin{proof}
Substituting \eqref{ytw} into \eqref{sf1}, we have for $\tau\in[t_k,t_{k+1})$
\begin{align}\label{aaa}
	d\widetilde{\Theta}_k(\tau)=&d(\Theta-\Theta_k(\tau))=-d\Theta_k(\tau)\notag\\
	=&-P_k(\tau)\Phi_k(\tau)d(S)\big\{dY(\tau)-\Phi_k^T(\tau)\Theta_k(\tau)d\tau\big\}\notag\\
	=&-P_k(\tau)\Phi_k(\tau)d(S)\big\{dV(\tau)+\notag\\&\hskip 2cm\Phi_k^T(\tau)\widetilde{\Theta}_k(\tau)d\tau+\widetilde{\Phi}_k^T(\tau)\Theta d\tau\big\}\notag\\
	=&-P_k(\tau)\Phi_k(\tau)\big\{dW(\tau)-d(S)d\widetilde{V}_k(\tau)\big\},
\end{align} where the equation \eqref{aa} is used. Furthermore, by \eqref{ytw1}, \eqref{h} and the definition of $f_k(\tau)$ and $g_k(\tau)$ in Lemma \ref{l2}, we have
\begin{align}\label{zxh}
	d\widetilde{\Theta}_k(\tau)
	=&-P_k(\tau)\Phi_k(\tau)(dW(\tau)+\Phi_k^T(\tau)\widetilde{\Theta}_k(\tau)d\tau\notag\\&+c(S)d\widetilde V_k(\tau)-d(S)d\widetilde V_k(\tau))\\
	=&-P_k(\tau)\Phi_k(\tau)\Big(dW(\tau)+f_k(\tau)d\tau+\frac{1}{2}g_k(\tau)d\tau\Big).\notag
\end{align}
Using Ito formula in Lemma \ref{lA.1} and (\ref{sf1}), we can derive the following equation,
\begin{align}\label{A}
	d\big(&\widetilde{\Theta}_k^T(\tau)P_k^{-1}(\tau)\widetilde{\Theta}_k(\tau)\big)\notag\\	=&\widetilde{\Theta}_k^T(\tau)dP_k^{-1}(\tau)\widetilde{\Theta}_k(\tau)
	+2\widetilde{\Theta}_k^T(\tau)P_k^{-1}(\tau)d\widetilde{\Theta}_k(\tau)\notag\\&+tr(P_k^{-1}(\tau)P_k(\tau)\Phi_k(\tau)\Phi_k^T(\tau)P_k(\tau))d\tau\notag\\	=&\widetilde{\Theta}_k^T(\tau)dP_k^{-1}(\tau)\widetilde{\Theta}_k(\tau)
	\notag\\&-2\widetilde{\Theta}_k^T(\tau)P_k^{-1}(\tau)P_k(\tau)\Phi_k(\tau)(dW(\tau)+f_k(\tau)d\tau\notag\\&+\frac{1}{2}g_k(\tau)d\tau)-tr(P_k^{-1}(\tau)dP_k(\tau)).
\end{align}
According to $P_k^{-1}(\tau)P_k(\tau)=I_{(p+q+r)N}$, we have
\begin{align}\label{DP}dP_k^{-1}(\tau)=-P_k^{-1}(\tau)dP_k(\tau)P_k^{-1}(\tau).\end{align} By this and (\ref{sf1}), the first term on the right hand side (RHS) of (\ref{A}) satisfies
\begin{align}\label{dpt}
	&\widetilde{\Theta}_k^T(\tau)dP_k^{-1}(\tau)\widetilde{\Theta}_k(\tau)\notag\\&=-\widetilde{\Theta}_k^T(\tau)P_k^{-1}(\tau)dP_k(\tau)P_k^{-1}(\tau)\widetilde{\Theta}_k(\tau)\nonumber\\
	&=\widetilde{\Theta}_k^T(\tau)\Phi_k(\tau)\Phi_k^T(\tau)\widetilde{\Theta}_k(\tau)d\tau.
\end{align}
Substituting (\ref{dpt}) into (\ref{A}) yields
\begin{align}\label{dtheta}
	&d\big(\widetilde{\Theta}_k^T(\tau)P_k^{-1}(\tau)\widetilde{\Theta}_k(\tau)\big)\notag\\=&\widetilde{\Theta}_k^T(\tau) \Phi_k(\tau)\Phi_k^T(\tau)\widetilde{\Theta}_k(\tau)d\tau\notag\\&-2\widetilde{\Theta}_k^T(\tau)P_k^{-1}(\tau)P_k(\tau)\Phi_k(\tau)(dW(\tau)\\
	&+f_k(\tau)d\tau+\frac{1}{2}g_k(\tau)d\tau)-tr(P_k^{-1}(\tau)dP_k(\tau))\notag\\
	=&-2g_k^T(\tau)dW(\tau)-2g_k^T(\tau)f_k(\tau)d\tau-tr(P_k^{-1}(\tau)dP_k(\tau)).\notag
\end{align}
Integrating both sides of \eqref{dtheta} on $[t_k,t_{k+1})$, we have
\begin{align}\label{AAAA}
	& \widetilde{\Theta}_k^T(t_{k+1})P_k^{-1}(t_{k+1})\widetilde{\Theta}_k(t_{k+1})\notag \\	=&\widetilde{\Theta}_k^T(t_k)P_k^{-1}(t_k)\widetilde{\Theta}_k(t_k)-2\int_{t_k}^{t_{k+1}}g_k^T(s)dW(s)\notag\\&-2\int_{t_k}^{t_{k+1}}{g_k^T(s)\big(f_k(s)-m_1 g_k(s)\big)ds}\\&{-
		2m_1\int_{t_k}^{t_{k+1}}\|g_k(s)\|^2ds}-\int_{t_k}^{t_{k+1}}tr\big({P^{-1}_k(s)}dP_k(s)\big),\notag
\end{align} where $m_1$ is a positive constant.

Note that by  \eqref{rh}, we have
\begin{align}
	&P_{k+1}(t_{k+1})\mathscr{A}P_k^{-1}(t_{k+1})\Theta\notag\\
	=&\rm{col}\Bigg\{\it{P_{k+\rm{1},\rm{1}}(t_{k+\rm{1}})}\sum_{j \in N_{\rm{1}}}a_{\rm{1}\it{j}}\it{P_{k,j}^{-\rm{1}}}(t_{k+\rm{1}})\theta,\cdots,\notag\\&\it{P_{k+\rm{1},\it N}(t_{k+\rm{1}})}\sum_{j \in N_N}a_{N\it{j}}\it{P_{k,j}^{-\rm{1}}}(t_{k+\rm{1}})\theta\Bigg\}={\rm\Theta}.\notag
\end{align}
Thus, by \eqref{sf} we can get the following equation,
\begin{align}
	\widetilde{\Theta}_{k+1}(t_{k+1})=
	P_{k+1}(t_{k+1})\mathscr{A}P_k^{-1}(t_{k+1})\widetilde{\Theta}_k(t_{k+1}).
\end{align}
Then
\begin{align}
	&\widetilde{\Theta}_{k+1}^T(t_{k+1})P_{k+1}^{-1}(t_{k+1})\widetilde{\Theta}_{k+1}(t_{k+1})\notag\\
	=&\widetilde{\Theta}_k^T(t_{k+1})P_k^{-1}(t_{k+1})
	\mathscr{A}P_{k+1}(t_{k+1})P_{k+1}^{-1}(t_{k+1})\notag\\&P_{k+1}(t_{k+1})\mathscr{A}P_k^{-1}(t_{k+1})\widetilde{\Theta}_k(t_{k+1})\notag\\
	\leq& \widetilde{\Theta}_k^T(t_{k+1})P_k^{-1}(t_{k+1})\widetilde{\Theta}_k(t_{k+1}),
	\label{zuixian}
\end{align} where Lemma \ref{l4} is used in the inequality.
By this inequality and (\ref{AAAA}), we have for all $k$,
\begin{align}\label{DDDD}
	& \widetilde{\Theta}_{k+1}^T(t_{k+1})P_{k+1}^{-1}(t_{k+1})\widetilde{\Theta}_{k+1}(t_{k+1})\notag\\\leq& \widetilde{\Theta}_k^T(t_{k+1})P_k^{-1}(t_{n+1})\widetilde{\Theta}_k(t_{k+1}) \notag\\
	\leq& \widetilde{\Theta}_k^T(t_{k})P_k^{-1}(t_k)\widetilde{\Theta}_k(t_k)-2\int_{t_k}^{t_{k+1}}g_k^T(s)dW(s)\notag\\&-2\int_{t_k}^{t_{k+1}}g_k^T(s)
	\cdot\big(f_k(s)-m_1 g_k(s)\big)ds\\&-2m_1\int_{t_k}^{t_{k+1}}\|g_k(s)\|^2ds-\int_{t_k}^{t_{k+1}}tr\big(P_k^{-1}(s)dP_k(s)\big).\notag
\end{align}
Summing both sides of the above inequality from $0$ to $n-1$, and integrating both sides of \eqref{dtheta} on $[t_n,t)$, we can derive that
\begin{align}\label{D}
	& \widetilde{\Theta}_{n}^T(t)P_{n}^{-1}(t)\widetilde{\Theta}_{n}(t)\notag\\\leq& 	\widetilde{\Theta}_0^T(t_{0})P_{0}^{-1}(t_0)\widetilde{\Theta}_0(t_0)\notag \\&-2\sum_{k=0}^{n-1}\int_{t_k}^{t_{k+1}}g_k^T(s)dW(s)-2\int_{t_n}^{t}g_n^T(s)dW(s)\notag\\
	&-2\sum_{k=0}^{n-1}\int_{t_k}^{t_{k+1}}{g_k^T(s)\big(f_k(s)-m_1 g_k(s)\big)ds}\notag\\
	&-2\int_{t_n}^{t}{g_n^T(s)\big(f_n(s)-m_1 g_n(s)\big)ds}\\
	&-2m_1\sum_{k=0}^{n-1}\int_{t_k}^{t_{k+1}}\|g_k(s)\|^2ds-2m_1\int_{t_n}^{t}\|g_n(s)\|^2ds\notag\\
	&-\sum_{k=0}^{n-1}\int_{t_k}^{t_{k+1}}tr\big(P_k^{-1}(s)dP_k(s)\big)-\int_{t_n}^{t}tr\big(P_n^{-1}(s)dP_n(s)\big).\notag
\end{align}

By applying Lemma \ref{lA.2} to the third and fourth terms on the RHS of \eqref{D}, we have for small $\eta >0$
\begin{align}
	&2\sum_{k=0}^{n-1}\int_{t_k}^{t_{k+1}}g_k^T(s)dW(s)+2\int_{t_n}^{t}g_n^T(s)dW(s)\notag\\
	=&O(1)+o\Big(\Big\{\sum_{k=0}^{n-1}\int_{t_k}^{t_{k+1}}\|g_k(s)\|^2ds+\int_{t_n}^{t}\|g_n(s)\|^2ds\Big\}^{\frac{1}{2}+\eta}\Big).\notag
\end{align}
Substituting the above inequality and (\ref{STR}) into \eqref{D}, we can derive that
\begin{align}\label{1}
	& \widetilde{\Theta}_{n}^T(t)P_{n}^{-1}(t)\widetilde{\Theta}_{n}(t)\notag\\&+2m_1\left(\sum_{k=0}^{n-1} \int_{t_k}^{t_{k+1}}\|g_k(s)\|^2ds+\int_{t_n}^{t}\|g_n(s)\|^2ds\right)\notag\\
	=&O(1)-\sum_{k=0}^{n-1}\int_{t_k}^{t_{k+1}}tr\big(P_k^{-1}(s)dP_k(s)\big)\notag\\&-\int_{t_n}^{t}tr\big(P_n^{-1}(s)dP_n(s)\big).
\end{align}	

Now, we are in a position to estimate the last two terms on the RHS of \eqref{1}. By Lemma \ref{l4}, we have
\beq\label{2}
&-\sum_{k=0}^{n-1} \int_{t_k}^{t_{k+1}}tr(P_k^{-1}(s)dP_k(s))-\int_{t_n}^{t}tr(P_n^{-1}(s)dP_n(s))\\
=&-\sum_{k=0}^{n-1} \int_{t_k}^{t_{k+1}}\frac{d\big|P_{k}(s)\big|}{ \big|P_{k}(s)\big|}-\int_{t_n}^{t}\frac{d\big|P_{n}(s)\big|}{ \big|P_{n}(s)\big|}\\
=&\sum_{k=0}^{n-1} \Big\{\log\big|P_k^{-1}(t_{k+1})\big|-\log\big|P_k^{-1}(t_k)\big|\Big\}\\
&+\log\big|P_n^{-1}(t)\big|-\log\big|P_n^{-1}(t_n)\big|\\
\leq& \sum_{k=0}^{n-1} \Big\{\log\big|P_{k+1}^{-1}(t_{k+1})\big|-\log\big|P_k^{-1}(t_k)\big|\Big\} \\
&+\log\big|P_n^{-1}(t)\big|-\log\big|P_n^{-1}(t_n)\big|\\
=&\log\big|P_{n}^{-1}(t)\big|-\log\big|P_0^{-1}(t_0)\big|\\
\leq& (p+q+r)N\Big\{\log\max_{1\leq i \leq N}\lambda_{\max}\big\{P_{n,i}^{-1}(t)\big\}\Big\}\notag\\&-\log\big|P_0^{-1}(t_0)\big|.
\eeq
By \eqref{sf1} and (\ref{DP}), we can get that
\begin{align}\label{n}
	dP_k^{-1}(\tau)=\Phi_k(\tau)\Phi_k^T(\tau)d\tau.
\end{align}
Integrating both sides of \eqref{n} on $[t_k,t_{k+1})$, we have
\begin{align}\label{nn}
	P_{k}^{-1}(t_{k+1})=P_k^{-1}(t_k)+\int_{t_k}^{t_{k+1}}\Phi_k(s)\Phi_k^T(s)ds.
\end{align}
By the definition of $\{P_{n,i}^{-1}(t)\}$ and the property of the weighted adjacency matrix $\mathcal{A}$, we have
\begin{align}
	&\max_{1\leq i\leq N}\lambda_{\max}\Big\{P_{n,i}^{-1}(t)\Big\}\notag\\
	=& \max_{1\leq i\leq N}\lambda_{\max}\Big\{P_{n,i}^{-1}(t_n)+\int_{t_n}^{t}\phi_{n,i}(s)\{\phi_{n,i}(s)\}^Tds\Big\} \notag\\
	=& \max_{1\leq i\leq N}\lambda_{\max}\Bigg\{ \sum_{j=1}^N a_{ij}^{(n)}P_{0,j}^{-1}(t_0)\notag\\&+\sum_{j=1}^N\sum_{k=0}^{n-1} a_{ij}^{(n-k)} \int_{t_k}^{t_{k+1}}\phi_{k,j}(s)\phi_{k,j}^T(s)ds\notag\\&+\int_{t_n}^{t}\phi_{n,i}(s)\{\phi_{n,i}(s)\}^Tds\Bigg\} \notag\\
	\leq& \max_{1\leq i\leq N}\lambda_{\max}\Big\{ P_{0,i}^{-1}(t_0)\Big\}\notag\\&+\sum_{j=1}^N\sum_{k=0}^{n-1} a_{ij}^{(n-k)} \int_{t_k}^{t_{k+1}}\|\phi_{k,j}(s)\|^2ds \notag\\&+\max_{1\leq i\leq N}\lambda_{\max}\Bigg\{\int_{t_n}^{t}\phi_{n,i}(s)\{\phi_{n,i}(s)\}^Tds\Bigg\}\notag\\
	\leq& \lambda_{\max}\Big\{ P_0^{-1}(t_0)\Big\}+\sum_{j=1}^N\sum_{k=0}^{n}\int_{t_k}^{t_{k+1}}\|\phi_{k,j}(s)\|^2ds.\notag
\end{align}
where $a_{ij}^{(m)}\geq 0$ represent the $(i,j)$th entry of the matrix $\mathcal{A}^m$ satisfying $\sum_{j=1}^N a_{ij}^{(m)}=1$.
Substituting the above  inequality and
into \eqref{1}, we have
\begin{align}
	&\widetilde{\Theta}_{n}^T(t)P_{n}^{-1}(t)\widetilde{\Theta}_{n}(t)\notag\\&+2m_1\left(\sum_{k=0}^{n-1} \int_{t_k}^{t_{k+1}}\|g_k(s)\|^2ds+\int_{t_n}^{t}\|g_n(s)\|^2ds\right)\notag\\
	=&O(\log R(t)).\notag
\end{align}
From this, we can conclude that Theorem \ref{th1} holds.
\end{proof}

Based on the above Theorem \ref{th1}, we can give the following theorem concerning the upper bound of the estimation error of Algorithm \ref{algorithm1}.
\begin{theorem}\label{th2}
	Under Assumptions \ref{c1} and \ref{c2}, the distributed LS algorithm defined by \eqref{sf1}-\eqref{sf} satisfies the following relationship as $t\to+\infty$,
	\begin{center}$\left\|\widetilde{\Theta}_{n}(t)\right\|^2=O\Big( \frac{\log\ R(t)}{\lambda_{\min}^n}\Big)$ a.s.,\end{center}
	where $R(t)$ and $\lambda_{\min}^n$ are defined in \eqref{jl} and \eqref{lambdamin} respectively.
\end{theorem}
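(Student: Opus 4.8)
The plan is to convert the ``weighted'' error bound of Theorem~\ref{th1} into a genuine bound on $\|\widetilde{\Theta}_n(t)\|^2$ by establishing a deterministic lower bound $\lambda_{\min}\big(P_n^{-1}(t)\big)\ge\rho\,\lambda_{\min}^n$ for some constant $\rho>0$ and all large $n$. Indeed, since $P_n^{-1}(t)$ is positive definite we have $\widetilde{\Theta}_n^T(t)P_n^{-1}(t)\widetilde{\Theta}_n(t)\ge\lambda_{\min}\big(P_n^{-1}(t)\big)\|\widetilde{\Theta}_n(t)\|^2$, and Theorem~\ref{th1}(2) already bounds the left-hand side by $O(\log R(t))$; dividing through by $\rho\lambda_{\min}^n$ then yields the assertion.

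To produce that lower bound I would first unroll the recursions. Iterating the adaptation identity \eqref{nn}, i.e.\ $P_{k,i}^{-1}(t_{k+1})=P_{k,i}^{-1}(t_k)+\int_{t_k}^{t_{k+1}}\phi_{k,i}(s)\phi_{k,i}^T(s)\,ds$, together with the diffusion step \eqref{sf} --- exactly as in the computation of $\max_i\lambda_{\max}\{P_{n,i}^{-1}(t)\}$ inside the proof of Theorem~\ref{th1} --- gives
\[ P_{n,i}^{-1}(t)=\sum_{j=1}^N a_{ij}^{(n)}P_{0,j}^{-1}(t_0)+\sum_{j=1}^N\sum_{k=0}^{n-1}a_{ij}^{(n-k)}\int_{t_k}^{t_{k+1}}\phi_{k,j}(s)\phi_{k,j}^T(s)\,ds+\int_{t_n}^{t}\phi_{n,i}(s)\phi_{n,i}^T(s)\,ds, \]
where $a_{ij}^{(m)}\ge0$ denotes the $(i,j)$ entry of $\mathcal A^m$. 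Every summand on the right is positive semidefinite, so I may discard the final integral and all intervals with index $k>n-D_{\mathcal G}$ at the cost of passing to a lower bound in the PSD order.

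Next I would invoke connectivity (Assumption~\ref{c1}): since $\mathcal G$ is connected and $\mathcal A$ is doubly stochastic with positive diagonal entries, $\mathcal A^{D_{\mathcal G}}$ has strictly positive entries, and because there are only finitely many agents $\rho:=\min_{i,j}\big(\mathcal A^{D_{\mathcal G}}\big)_{ij}>0$. Writing $\mathcal A^{m}=\mathcal A^{D_{\mathcal G}}\mathcal A^{\,m-D_{\mathcal G}}$ and using that $\mathcal A^{\,m-D_{\mathcal G}}$ has unit column sums gives $a_{ij}^{(m)}\ge\rho$ for all $i,j$ whenever $m\ge D_{\mathcal G}$. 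Hence, for $n\ge D_{\mathcal G}$, every $a_{ij}^{(n)}$ and every $a_{ij}^{(n-k)}$ with $k\le n-D_{\mathcal G}$ is $\ge\rho$, so in the PSD order
\[ P_{n,i}^{-1}(t)\ \ge\ \rho\Bigg(\sum_{j=1}^N P_{0,j}^{-1}+\sum_{j=1}^N\sum_{k=0}^{n-D_{\mathcal G}}\int_{t_k}^{t_{k+1}}\phi_{k,j}(s)\phi_{k,j}^T(s)\,ds\Bigg) \]
for each $i$. The bracketed matrix is precisely the one appearing in \eqref{lambdamin}, whence $\lambda_{\min}\big(P_{n,i}^{-1}(t)\big)\ge\rho\,\lambda_{\min}^n$; and since $P_n^{-1}(t)=\mathrm{diag}\{P_{n,1}^{-1}(t),\dots,P_{n,N}^{-1}(t)\}$, the same bound holds for $\lambda_{\min}\big(P_n^{-1}(t)\big)$.

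Finally I would chain the two estimates: $\rho\,\lambda_{\min}^n\|\widetilde{\Theta}_n(t)\|^2\le\widetilde{\Theta}_n^T(t)P_n^{-1}(t)\widetilde{\Theta}_n(t)=O(\log R(t))$ by Theorem~\ref{th1}, so $\|\widetilde{\Theta}_n(t)\|^2=O\big(\log R(t)/\lambda_{\min}^n\big)$ almost surely, the a.s.\ qualifier being inherited from the $o(\cdot)$ term in Theorem~\ref{th1}; the finitely many cases $n<D_{\mathcal G}$ are absorbed into the $O(\cdot)$ constant. I expect the only real obstacle to be the combinatorial bookkeeping that pins down the mixing exponent of the $k$-th interval (hence the shift $n-D_{\mathcal G}$ in \eqref{lambdamin}); the uniform positivity $a_{ij}^{(m)}\ge\rho$ for $m\ge D_{\mathcal G}$ is standard, but it is exactly the place where Assumption~\ref{c1} enters.
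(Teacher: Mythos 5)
Your proposal is correct and follows essentially the same route as the paper's own proof: lower-bound $\widetilde{\Theta}_n^T(t)P_n^{-1}(t)\widetilde{\Theta}_n(t)$ by $\lambda_{\min}\{P_n^{-1}(t)\}\|\widetilde{\Theta}_n(t)\|^2$, unroll the adaptation/diffusion recursions to write $P_{n,i}^{-1}(t)$ as a convex combination weighted by the entries $a_{ij}^{(n-k)}$ of powers of $\mathcal A$, discard the terms with $k>n-D_{\mathcal G}$, and use connectivity to get the uniform lower bound $a_{ij}^{(m)}\geq a_{\min}>0$ for $m\geq D_{\mathcal G}$ (the paper cites Lemma~8.1.2 of \cite{[30]} for exactly the positivity fact you derive by hand). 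The only cosmetic difference is that you justify the entrywise positivity directly from double stochasticity and positive diagonal entries rather than by citation.
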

\begin{proof} By Theorem \ref{th1}, we can get the following inequality almost surely,
\begin{align}\label{hkz}
	\widetilde{\Theta}^T_{n}(t)\widetilde{\Theta}_{n}(t)\leq& \frac{\widetilde{\Theta}_{n}^T(t)P_{n}^{-1}(t)\widetilde{\Theta}_{n}(t)}{\lambda_{\min}\big\{P_{n}^{-1}(t)\big\}}\notag\\=&\frac{O(\log R(t))}{\lambda_{\min}\big\{P_{n}^{-1}(t)\big\}}.
\end{align}
By lemma\ 8.1.2 in \cite{[30]}, we know that if the graph $\mathcal{G}$ is connected, then $a_{ij}^{(m)}\geq a_{\min}>0$ holds for any $m>D_{\mathcal{G}}$, where $a_{\min}=\min \limits_{i,j\in \mathcal V} a_{ij}^{(D_{\mathcal{G}})}>0$.

In the following, we estimate the smallest eigenvalue of the matrix $P_{n}^{-1}(t)$. By \eqref{sf} and \eqref{n}, we can get that
\begin{align}
	&{\vecc}\{P_{n}^{-1}(t)\}\notag\\
	=&{\vecc} \{P_{n}^{-1}(t_{n})\}+{\vecc} \Big\{\int_{t_n}^{t} \Phi_n(s) \Phi_n^T(s) ds\Big\} \notag\\
	=&\mathscr{A}^{n}{\vecc} \{P_0^{-1}(t_0)\}+{\vecc} \Big\{\int_{t_n}^{t} \Phi_n(s) \Phi_n^T(s) ds\Big\}\notag\\
	&+\sum_{k=0}^{n-1} \mathscr{A}^{n-k}{\vecc} \Big\{\int_{t_k}^{t_{k+1}} \Phi_k(s) \Phi_k^T(s) ds\Big\}\notag.
\end{align}
Thus,
\begin{align}
	&\lambda_{\min}\Big\{P_{n}^{-1}(t)\Big\} =\min_{1\leq i\leq N}\lambda_{\min}\Big\{P_{n,i}^{-1}(t)\Big\} \notag\\
	\geq& \min_{1\leq i\leq N}\lambda_{\min}\Big\{ \sum_{j=1}^N a_{ij}^{(n)}P_{0,j}^{-1}(t_0)\notag\\
	&+\sum_{j=1}^N\sum_{k=0}^{n-D_{\mathcal{G}}} a_{ij}^{(n-k)} \int_{t_k}^{t_{k+1}}\phi_{k,j}(s)\phi_{k,j}^T(s)ds\Big\} \notag
\end{align}
\begin{align}
	\geq& \lambda_{\min}\Big\{ a_{\min}\sum_{j=1}^N P_{0,j}^{-1}(t_0)\notag\\
	&+a_{\min}\sum_{j=1}^N \sum_{k=0}^{n-D_{\mathcal{G}}}  \int_{t_k}^{t_{k+1}}\phi_{k,j}(s)\phi_{k,j}^T(s)ds\Big\}, \notag
\end{align} where $a_{\min}=\min \limits_{i,j\in \mathcal V} a_{ij}^{(D_{\mathcal{G}})}>0$. From this and \eqref{hkz}, we have
\begin{align}\label{v}
	\|\widetilde{\Theta}_{n}(t)\|^2=O\bigg(\frac{ \log R(t)}{\lambda_{\min}^n}\bigg)\ \ \ a.s..
\end{align}
We complete the proof of the theorem.
\end{proof}
\begin{remark} Theorem \ref{th2} gives the upper bound for the estimate $\Theta_{n}(t)$ obtained from Algorithm \ref{algorithm1}. Furthermore, if the cooperative excitation condition (i.e., Assumption \ref{c3}) holds, then the estimate $\Theta_{n}(t)$ converges to the true parameter vector $\Theta$ almost surely, i.e., $\lim\limits_{t\to+\infty}\|\widetilde{\Theta}_{n}(t)\|^2=0$.
\end{remark}

\section{Simulation Results }\label{SIMULATION}

In this section, we provide simulation examples to illustrate the cooperative effect of multiple sensors in Algorithm 1.
 
 We first consider a practical application in which a network of six sensors cooperatively estimates the unknown parameters L, R, and C of an RLC circuit system. The topology structure of sensor network is shown in Fig.\ref{figtp}, and its information exchange weights are determined by the Metropolis rule \citep{xiao2005scheme}. The dynamics of the RLC circuit system are described by the following differential equation:
 \beq\label{RLC}
 L\ddot{\epsilon}_i(t)+R\dot{\epsilon}_i(t)+\frac{1}{C}\epsilon_i(t)=u_i(t),i=1,2,\cdots,6,
 \eeq
 where the unknown parameters $L$, $R$ and $C$ represent the inductance, resistance and capacitance of the RLC system, $u_i(t)$ and $\epsilon_i(t)$ are the input and output measured by sensor $i$, respectively. Clearly, we can rewrite \eqref{RLC} into the following form by denoting $y_i(t)=\dot{\epsilon}_i(t)$,
 \beqn
 \dot{y}_i(t)=\theta^T\phi_i^0(t), i=1,2,\cdots,6,
 \eeqn
 where\\ $\theta=(-\frac{R}{L},-\frac{1}{LC},\frac{1}{L})^T$ and $\phi_i^0(t)=(y_i(t),Sy_i(t),u_i(t))^T.$ 
 
 In simulation, set the true values are $R=3$, $L=5$, $C=5$. The input signals of all sensors are  $u_1(t)=2\cos(t),u_2(t)=\sin(2t),u_3(t)=3\cos(2t),u_4(t)=2\cos(0.5t)+\sin(0.5t),u_5(t)=\sin(2t)+3\cos(2t),u_6(t)=5\cos(0.5t)$, respectively.
  We next apply the distributed LS algorithm proposed in this paper (Algorithm \ref{algorithm1}) with the fusion time interval 0.2, the standard LS algorithm in \cite{2006Continuous},  and the cooperative gradient algorithm \{\cite{6578135,javed2021excitation}\} to estimate the unknown parameter vector $\theta$, respectively. 
 Begin with the same initial states, we conduct the simulation for $100$ runs. 
 
 As can be seen in Fig.\ref{fig_circuit}, the estimate $\theta_{k,i}(t)=(\theta_{k,i}(t;1),\theta_{k,i}(t;2),\theta_{k,i}(t;3))^T$ of each sensor $i$ generated by Algorithm \ref{algorithm1} can converge to its true value since the regressors $\{\phi_i^0(t),i=1.\cdots,6\}$ of these six sensors can jointly satisfy the  Cooperative Excitation Conidtion (Assumption \ref{c3}). However, we can easily verify that
 none of them can satisfy excitation condition \eqref{jd}, as shown in Fig.\ref{fig_circuit},
 the estimates $\theta_{k,i}(t),i=1,2,\cdots,6$ generated by the standard LS algorithm in \cite{2006Continuous} can not converge to the true value.
  
  Fig.\ref{figure_sgbi_1} compares the performance of the cooperative gradient algorithm and Algorithm \ref{algorithm1} and it can be seen that Algorithm \ref{algorithm1} proposed in this paper has better estimation performance.
 
We next consider the stachastic regressors case. Suppose a sensor network consisting of $N=12$ sensors whose dynamics obey the equation \eqref{xt}. The noise process $\{v_{i}(t),i=1,2,\cdots,12,\ t\geq0\}$ in \eqref{xt} are independent scalar Wiener process. The $10$-dimensional unknown parameter vector $\theta$ is
	$\theta=(1,2,\cdots,10)^T$, and the regression vectors  $\phi_i^0(t)\ (i=1,2,\cdots,12)$ are generated by the following method,
	\begin{align}
		x_i(t,q)=&\left\{\begin{array}{ll}
			0.3t+\xi_i(t) \quad &{\rm if}\ \mod(i,3)=0\ \\
			t+\xi_i(t)   \quad &{\rm if}\ \mod(i,3)=1\\
			t^2+\xi_i(t)  \quad &{\rm if}\ \mod(i,3)=2
		\end{array},1\leq q\leq 10\right.
		\notag\\
		\phi_i^0(t)=&[{\textbf 0},\cdots,{\textbf 0},{\textbf e_\tau},{\textbf 0},\cdots,{\textbf 0}
		]^T x_i(t),	\notag
	\end{align}
	where $x_i(t)=(x_i(t,1),\cdots,x_i(t,10))^T$,  ${\textbf 0}$ denotes the $10$-dimensional column vector whose elements are all $0$, $\tau=\mod(i,10)$ and ${\textbf e_\tau}$  represents the $\tau$-th column of the identity matrix $I_{10}$. Let $\{\xi_i(t),i=1,\cdots,12,t\geq 0\}$ be independent scalar Wiener process.
	The topology structure of network graph $\mathcal G$ is shown in Fig.\ref{figtp1}, and its weights are also determined by the Metropolis rule.
	
	The regression vectors $\phi_{k,i}(t)\ (i=1,2,\cdots,12)$ are equal to $\phi_i^0(t)\ (i=1,2,\cdots,12)$ above. Clearly, the regression vectors $\phi_{k,i}(t)\ (i=1,2,\cdots,12)$ can cooperatively satisfy the Cooperative Excitation Condition (Assumption \ref{c3}), but none of them can satisfy the excitation condition \eqref{jd} since there are many zero elements in $\phi_{k,i}(t)$. We conduct the simulation for 100 runs with the same initial states.

	The unknown parameter vector $\theta$ is estimated by Algorithm \ref{algorithm1} with the fusion time interval 0.2 and the standard LS algorithm in \cite{2006Continuous}. The following Fig.\ref{fig2} shows the change of the mean square errors (MSEs) (averaged over 100 runs) of the $12$ sensors with time $t$ for these two algorithms.  In the upper part of Fig.\ref{fig2} where the MSEs for Algorithm \ref{algorithm1} proposed in this paper can converge to zero in a certain rate since the sensors can jointly satisfy the Cooperative Excitation Conidtion (Assumption \ref{c3}). While in the lower part of Fig.\ref{fig2}, the estimates are obtained from the standard LS algorithm without information exchange between sensors, the MSEs of the sensors can not converge to zero since the sensors do not have enough excitation.
	
	From the above simulation examples, the joint effect of the multi-sensor network can be revealed in a sense that the sensors can accomplish the estimation tasks through information exchange even if none of the sensors can not.
	\begin{figure}[htbp]
		\centerline{\includegraphics[width=0.6\columnwidth]{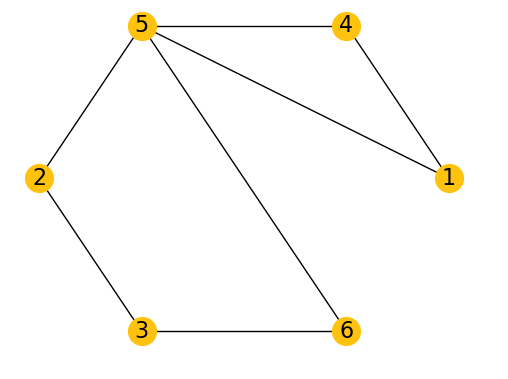}}
		\caption{The topology structure of sensor network.}
		\label{figtp}
	\end{figure}
\begin{figure}[htbp]
	\centerline{\includegraphics[width=1.2\columnwidth]{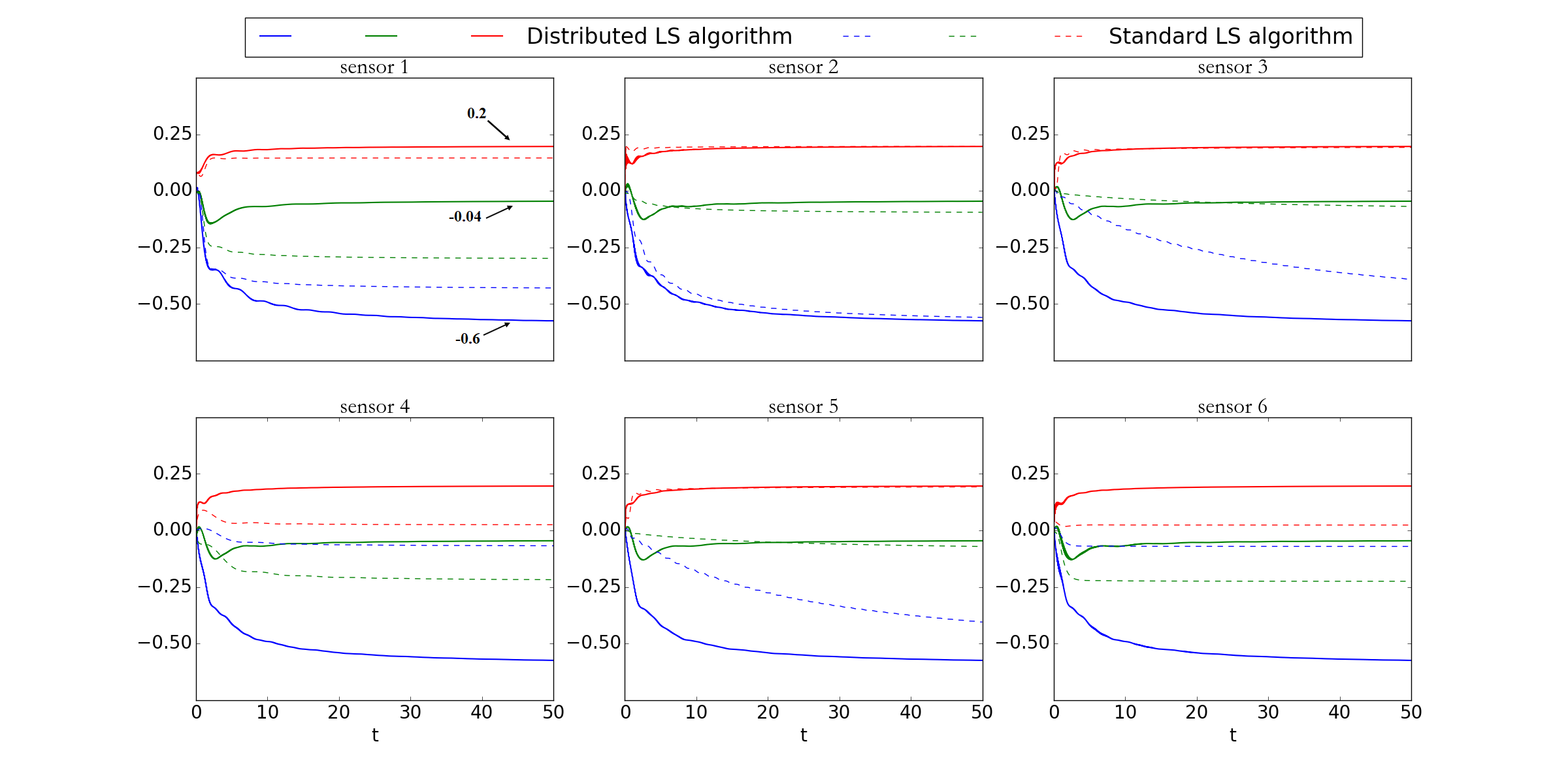}}
	\caption{The estimate $\theta_{k,i}(t)$ of each sensor $i$ ($i=1,2,\cdots,6$) obtained from Algorithm \ref{algorithm1} and standard LS algorithm, respectively.}
	\label{fig_circuit}
\end{figure}

\begin{figure}[htbp]
	\centerline{\includegraphics[width=1.2\columnwidth]{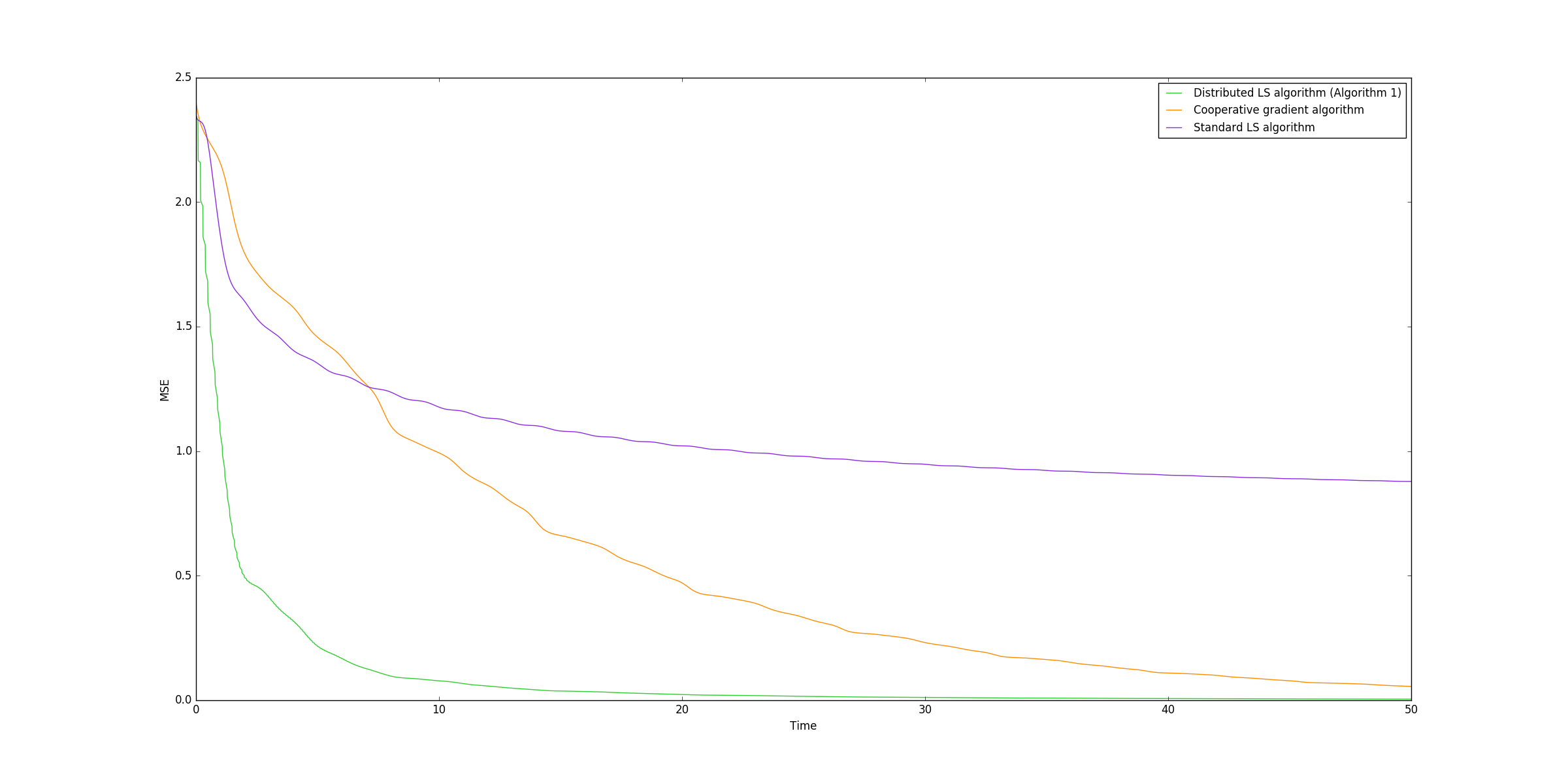}}
	\caption{The comparison of several algorithms.}
	\label{figure_sgbi_1}
\end{figure}

\begin{figure}[htbp]
	\centerline{\includegraphics[width=0.6\columnwidth]{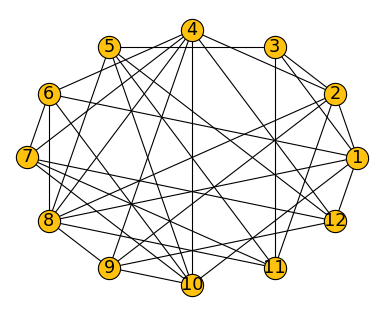}}
	\caption{The topology structure of network graph $\mathcal G$.}
	\label{figtp1}
\end{figure}
	\begin{figure}[htbp]
		\centerline{\includegraphics[width=0.95\columnwidth]{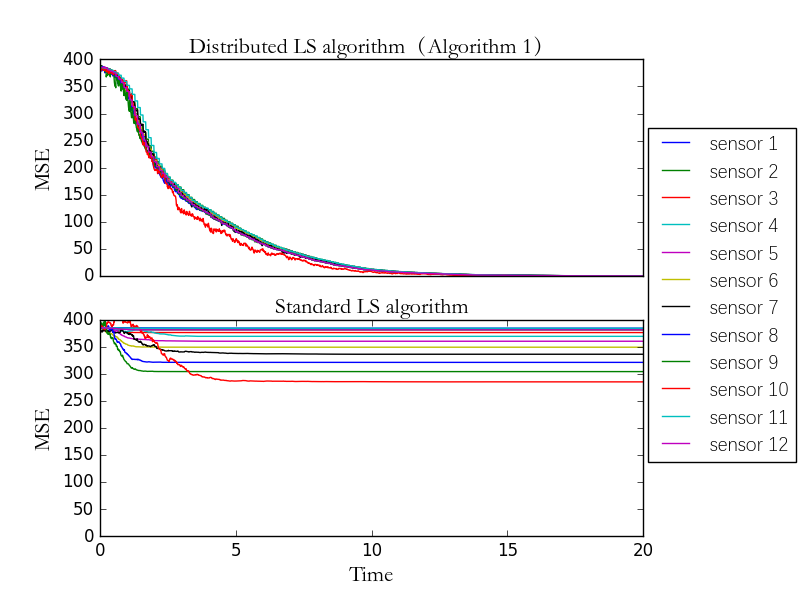}}
		\caption{The MSEs of sensors where the estimates are obtained from Algorithm \ref{algorithm1} and the standard LS algorithm, respectively.}
		\label{fig2}
	\end{figure}

\section{Concluding Remarks}\label{conclusion}
This paper mainly considers the distributed parameter estimation problem of continuous-time linear stochastic regression systems over sensor networks. We propose the distributed LS algorithm by using the continuous-time noisy signals to estimate the unknown parameter vector. The upper bounds of the estimation error for the proposed algorithm is obtained, and the convergence analysis is further presented under a cooperative excitation condition. Different from most results on the distributed estimation of continuous-time systems in the existing literature, our results are obtained without relying on the boundedness and PE conditions of the regression vectors. Moreover, the cooperative excitation condition can reveal the joint effect of multiple sensors in the proposed distributed LS algorithm. Many interesting problems deserve to be further investigated, such as the distributed estimation problem of the continuous-time nonlinear stochastic regression systems, the performance analysis of the continuous-time distributed adaptive filtering, the distributed adaptive control problem, etc.

\bibliographystyle{apalike}        
\bibliography{my_ref}      

\end{document}